\documentclass[a4paper,10pt]{article}
\usepackage[english]{babel}
\usepackage[T1]{fontenc}
\usepackage[utf8]{inputenc}
\usepackage{amssymb}
\usepackage{physics}
\usepackage{amsmath}
\usepackage{graphicx}
\usepackage[unicode]{hyperref}
\usepackage{tikz}
\usepackage{color}
\usepackage{genyoungtabtikz}
\usepackage[title]{appendix}
\usepackage{indentfirst}
\usepackage{bm}
\usepackage{xhfill}
\usepackage{bbm}
\usepackage{multirow}
\usepackage{array}
\usepackage[centertableaux]{ytableau}
\usepackage{multicol}
\usepackage[shortlabels]{enumitem}
\usepackage{forest}
\usepackage{float}
\usepackage{mathrsfs}

\usetikzlibrary{decorations.pathreplacing}

\usepackage{ytableau}

\hypersetup{
    colorlinks=true,
    linkcolor=blue,
    filecolor=magenta,      
    urlcolor=cyan,
}
 
\usepackage{amsthm}

\theoremstyle{definition}

\usepackage{amsthm}

\theoremstyle{plain}
\newtheorem{proposition}{Proposition}[section]

\theoremstyle{definition}

\theoremstyle{remark}

\usepackage[nottoc]{tocbibind}

\usepackage{authblk}

\title{\textbf{Bulk Monodromy of Logarithmic Graviton Descendants in Critical Topologically Massive Gravity}}

\author[1,2]{\textbf{Yannick Mvondo-She}}
\affil[1]{National Institute of Theoretical and Computational Sciences, Private Bag X1, Matieland, South Africa}
\affil[1]{\texttt{yannick.mvondo-she@nithecs.ac.za}}
\affil[2]{Department of Physics, University of Pretoria, Private Bag X20, Hatfield 0028, \hspace{1cm} South Africa}

\date{}

\begin{document}

\maketitle

\begin{abstract}

We study the bulk analytic structure of the logarithmic graviton and
its global descendants in critical topologically massive gravity.
Starting from the Grumiller--Johansson mode, we complexify the
radial coordinate and derive its monodromy directly from the branch
structure and winding data of the logarithmic radial factor. We then
construct the global logarithmic descendants by explicit differential
action and show that each descendant decomposes into a universal
logarithmic contribution and a log-free meromorphic remainder. This
implies a universal unipotent monodromy throughout the global
descendant space. The associated nilpotent operator, obtained directly
from the bulk analytic continuation, is shown to intertwine the full
global $SL(2,\mathbb R)_L\times SL(2,\mathbb R)_R$ action. These results provide a direct bulk analytic realization of the logarithmic structure,
linking radial monodromy and global conformal symmetry.

\end{abstract}

\tableofcontents

\section{Introduction}
\label{sec:introduction}

Topologically massive gravity (TMG) provides one of the simplest settings in which local gravitational dynamics, asymptotic symmetry, and holography can be studied simultaneously in three spacetime dimensions. The theory is obtained by supplementing the Einstein--Hilbert action with negative cosmological constant by a gravitational Chern--Simons term \cite{Deser:1981wh,Deser:1982vy}. 

A revival of interest in three-dimensional topologically massive gravity (TMG) was catalyzed by a seminal work \cite{Li:2008dq}, which proposed that tuning the gravitational Chern-Simons coupling to a specific critical point yields a consistently unitary and stable theory termed \emph{chiral gravity}. On an $\mathrm{AdS}_3$ background of radius $\ell$, the linearized spectrum of the theory depends on the dimensionless combination $\mu\ell$, where $\mu$ is the topological mass parameter. In \cite{Li:2008dq}, a particular interest was attached to the chiral point $\mu\ell=1$, at which it was conjectured that the problematic negative-energy massive bulk ghosts decouple from the spectrum, leaving a theory holographically dual to a purely right-moving two-dimensional conformal field theory (CFT). The claim initiated an intense and rapid succession of investigations to scrutinise the stability and consistency of the vacuum, and subsequent analyses—most notably \cite{Carlip:2008jk,Giribet:2008bw}—contested that observation by pointing out the survival of non-trivial degrees of freedom. The persistence of physical, non-trivial logarithmic bulk modes at the critical point motivated the logarithmic relaxation of the theory, whose holographic description is naturally associated with a logarithmic CFT (LCFT). A definitive resolution of this conflict was provided in an immediate subsequent work \cite{Grumiller:2008qz}, which explicitly identified a physically propagating, negative-energy bulk mode characterized by logarithmic asymptotic growth. Rather than rendering the theory unphysical, this finding prompted the authors of \cite{Grumiller:2008qz} to broaden the asymptotic boundary conditions. This framework was later solidified through rigorous computations \cite{Grumiller:2008pr,Grumiller:2008es,Grumiller:2009mw,Ertl:2010dh,Gaberdiel:2010xv}, thereby establishing critical TMG as the conjectured gravity dual to a Logarithmic CFT (LCFT) \cite{Grumiller:2010rm,Grumiller:2010tj,Grumiller:2013at}.

The physical interpretation of Topologically Massive Gravity at the critical point depends entirely on the choice of asymptotic boundary conditions. Enforcing strict Brown-Henneaux asymptotically $\mathrm{AdS}_3$ boundary conditions \cite{Brown:1986nw} truncates the logarithmic sector to yield a unitary, purely right-moving chiral gravity. The chiral theory and its associated gravitational path integral have also motivated broader developments in three-dimensional quantum gravity \cite{Maloney:2015ina,Eberhardt:2022wlc}. The present paper does not study this Brown--Henneaux sector.

A different theory results when the boundary conditions are relaxed so
as to retain the additional solution that appears at the degeneracy
point. In this setting, critical TMG admits a logarithmic graviton whose asymptotic behavior violates the standard Brown--Henneaux falloff only logarithmically \cite{Grumiller:2008qz}. The corresponding theory,
also referred to as \textit{log gravity} \cite{Maloney:2009ck}, is the one considered in the present work. Its holographic interpretation is naturally associated with a logarithmic conformal field theory (LCFT), and holographic renormalization at the critical point reproduces the characteristic logarithmic two-point functions and new anomaly of the dual LCFT \cite{Skenderis:2009nt}. 

Logarithmic conformal field theories are characterized by
non-diagonalizable actions of the conformal generators and the
appearance of indecomposable representations
\cite{Gurarie:1993xq,Flohr:2001zs}. Their earlier relation to AdS/CFT \cite{Kogan:1999bn} and, in particular, to $\mathrm{AdS}_3/\mathrm{LCFT}_2$ holography \cite{Myung:1999nd} predates the application to critical TMG and provides the general representation-theoretic setting in which logarithmic bulk modes should be understood. Indecomposable and staggered Virasoro modules have been studied systematically in this broader LCFT framework \cite{Kytola:2009ax}, including their extension structure, descendant spaces, and logarithmic couplings. These results are important for the interpretation of critical TMG: the existence of a rank-two Jordan cell and the propagation of its indecomposable structure through a descendant representation are established features of logarithmic conformal theory.

Following the seminal derivation of the one-loop partition function of log gravity \cite{Gaberdiel:2010xv}, the combinatorics of the logarithmic sector was clarified by reformulating the logarithmic partition function via Bell polynomials \cite{Mvondo-She:2018htn}, subsequently establishing deep connections between critical TMG, integrable KP hierarchies, and Hurwitz numbers \cite{Mvondo-She:2019vbx,Mvondo-She:2021joh,Mvondo-She:2022jnf}.

For the logarithmic graviton of critical TMG, the ordinary left-moving
mode $\psi_{\mu\nu}^{L}$ and its logarithmic partner
$\psi_{\mu\nu}^{\log}$ form such a non-semisimple pair. In the
normalization used below, the total zero-mode operator
\begin{equation}
D=L_0+\bar L_0
\end{equation}
acts at the primary level according to
\begin{equation}
D\psi_{\mu\nu}^{L}
=
2\psi_{\mu\nu}^{L},
\qquad
D\psi_{\mu\nu}^{\log}
=
2\psi_{\mu\nu}^{\log}
+
\psi_{\mu\nu}^{L}.
\label{eq:intro_primary_jordan}
\end{equation}
The associated rank-two Jordan structure is therefore known from the
original construction of the logarithmic mode and its LCFT
interpretation. Similarly, once the Virasoro commutation relations are
given, the inheritance of this structure by descendants is an algebraic consequence.

The question addressed here is a bulk analytic one. The Grumiller--Johansson mode \cite{Grumiller:2008qz} contains the coordinate-dependent logarithmic factor
\begin{equation}
\psi_{\mu\nu}^{\log}
=
y\,\psi_{\mu\nu}^{L},
\qquad
y=-i\tau-\log\cosh\rho.
\label{eq:intro_exact_log_mode}
\end{equation}
It was recently observed that the defining algebraic structures of the logarithmic sector can be recovered entirely from the bulk geometry. Indeed, by complexifying the radial coordinate, it was shown \cite{Mvondo-She:2026egr}, that a unipotent monodromy action forces the linearized solutions into an indecomposable Jordan block representation. Crucially, this geometric monodromy is sufficiently constraining to reproduce the characteristic logarithmic two-point functions and mixing structures without a priori assumptions about the boundary LCFT data. After introducing a complex radial coordinate, the factor $y$ in Eq. (\ref{eq:intro_exact_log_mode}) becomes a genuinely multivalued function whose analytic continuation can be studied independently of the abstract Jordan representation. This raises a sequence of questions. What are the branch points of the radial logarithm? How does its continuation depend on the homotopy class of a contour in the complex radial plane? What happens to this analytic structure when the tensor field is acted upon repeatedly by the global lowering generators? Do the additional terms generated by differential action introduce further multivaluedness? Finally, if a monodromy operator is obtained from this bulk continuation, does it intertwine the complete global
$SL(2,\mathbb R)_L\times SL(2,\mathbb R)_R$ representation, including
the raising generators?

These questions distinguish the present construction from a purely
representation-theoretic argument. Starting from a known rank-two
Jordan pair and commuting its nilpotent map through the lowering
operators would immediately reproduce the same Jordan structure at
every descendant level. Such an argument, however, contains no
information about the analytic continuation of the coordinate-dependent
bulk fields. In particular, it does not determine the branch divisor
of the radial function, the winding-number dependence of its
monodromy, or the analytic character of the additional terms generated
by the differential action of the global conformal generators. The present work develops precisely this bulk construction. 

Writing the radial dependence of the logarithmic factor in terms of a complex coordinate $r$, we identify the meromorphic function
\begin{equation}
F(r)
=
\frac{r^2+1}{2r},
\label{eq:intro_F}
\end{equation}
and show that its zeros at $r=\pm i$ and its pole at $r=0$ determine
the logarithmic continuation. For a closed contour $C$ avoiding these
points, the resulting shift is
\begin{equation}
\Delta_C y
=
2\pi i
\left[
w_C(0)-w_C(i)-w_C(-i)
\right],
\label{eq:intro_winding_formula}
\end{equation}
where $w_C(p)$ denotes the winding number of $C$ about $p$. The
small-radius and asymptotic contour classes are consequently
inequivalent. For a positively oriented contour enclosing all three
distinguished points,
\begin{equation}
\Delta_\infty y=-2\pi i.
\label{eq:intro_asymptotic_shift}
\end{equation}
Thus the sign and normalization of the asymptotic monodromy follow from
the divisor and winding data of the  bulk radial function. 

A second ingredient is the explicit differential construction of the
global descendants. Defining
\begin{equation}
\psi_{\mu\nu;m,n}^{L}
=
L_{-1}^{\,m}\bar L_{-1}^{\,n}\psi_{\mu\nu}^{L},
\qquad
\psi_{\mu\nu;m,n}^{\log}
=
L_{-1}^{\,m}\bar L_{-1}^{\,n}\psi_{\mu\nu}^{\log},
\label{eq:intro_descendant_definitions}
\end{equation}
we show that every logarithmic descendant can be written as
\begin{equation}
\psi_{\mu\nu;m,n}^{\log}
=
y\,\psi_{\mu\nu;m,n}^{L}
+
\chi_{\mu\nu;m,n},
\label{eq:intro_general_decomposition}
\end{equation}
where the remainder $\chi_{\mu\nu;m,n}$ contains no logarithmic branch
and is meromorphic in the complex radial coordinate. The lowering
operators therefore generate nontrivial additional bulk terms, but
they do not generate additional logarithmic multivaluedness. The
unique multivalued contribution at every bi-grade remains
$y\psi_{\mu\nu;m,n}^{L}$.

This result promotes the primary radial continuation to an exact
statement across the global descendant family
\begin{equation}
M_\infty\psi_{\mu\nu;m,n}^{L}
=
\psi_{\mu\nu;m,n}^{L},
\qquad
M_\infty\psi_{\mu\nu;m,n}^{\log}
=
\psi_{\mu\nu;m,n}^{\log}
-
2\pi i\,\psi_{\mu\nu;m,n}^{L}.
\label{eq:intro_universal_monodromy}
\end{equation}
The uniform coefficient in
Eq.~\eqref{eq:intro_universal_monodromy} is thus obtained from the
analytic structure of the bulk descendants rather than assumed
from the known Jordan representation. This permits us to define a
bulk-derived nilpotent operator
\begin{equation}
\mathcal N_\infty
=
-\frac{1}{2\pi i}
\left(
M_\infty-\mathbf 1
\right),
\qquad
\mathcal N_\infty^2=0,
\label{eq:intro_Ninfty}
\end{equation}
whose action maps each logarithmic descendant to its ordinary partner.

A central test is then whether this independently constructed operator
is compatible with the global conformal representation. We compute the
actions of $L_0$, $\bar L_0$, $L_1$, and $\bar L_1$ explicitly. In
particular, the raising generators contain additional mixing terms on
the logarithmic tower, so their action is not obtained by simply
duplicating the ordinary highest-weight representation. Nevertheless,
we find
\begin{equation}
[\mathcal N_\infty,L_a]
=
[\mathcal N_\infty,\bar L_a]
=
0,
\qquad
a=-1,0,1,
\label{eq:intro_global_intertwining}
\end{equation}
and hence
\begin{equation}
[M_\infty,L_a]
=
[M_\infty,\bar L_a]
=
0.
\label{eq:intro_M_global_intertwining}
\end{equation}
The bulk monodromy therefore lies in the commutant of the global
$SL(2,\mathbb R)_L\times SL(2,\mathbb R)_R$ action on the logarithmic
descendant space. The nilpotent map familiar from the logarithmic
representation is thereby recovered from, and tested against, an
independently defined bulk analytic continuation. 

The remainder of the paper is organized as follows.
Section~\ref{sec:critical_tmg} reviews critical TMG, distinguishes
chiral gravity from log gravity, introduces the logarithmic
graviton, and fixes the global
$SL(2,\mathbb R)_L\times SL(2,\mathbb R)_R$ conventions.
Section~\ref{sec:radial_monodromy} analyzes the logarithmic
factor in the complex radial plane, derives its branch and
winding-number structure, and determines the asymptotic monodromy.
Section~\ref{sec:bulk_descendants} constructs the bulk
logarithmic descendants and establishes their decomposition into a
universal logarithmic term and a log-free meromorphic remainder.
Section~\ref{sec:descendant_monodromy} derives the universal monodromy
throughout the global descendant space and constructs the associated
nilpotent operator. Section~\ref{sec:global_action} computes the full
global conformal action and proves that the bulk-derived monodromy
intertwines $SL(2,\mathbb R)_L\times SL(2,\mathbb R)_R$. Section~\ref{sec:conclusions} summarizes our results and discusses extensions left for future work.

\section{Critical Topologically Massive Gravity and the Logarithmic Graviton}
\label{sec:critical_tmg}

Topologically massive gravity (TMG) provides a three-dimensional theory of
gravity in which the Einstein--Hilbert action with negative cosmological
constant is supplemented by a gravitational Chern--Simons term
\cite{Deser:1981wh,Deser:1982vy}. Around an $\mathrm{AdS}_3$ background,
the theory propagates a local massive spin-two degree of freedom whose
properties depend on the dimensionless combination $\mu\ell$, where
$\mu$ is the Chern--Simons coupling and $\ell$ is the AdS radius. The
critical value
\begin{equation}
\mu\ell=1
\label{eq:chiral_point}
\end{equation}
is distinguished by a degeneracy of the linearized equations: the massive
branch coalesces with the left-moving massless branch. The physical
interpretation of this degeneration depends crucially on the boundary
conditions imposed on the theory. With Brown--Henneaux boundary conditions,
the critical theory leads to the chiral-gravity construction
\cite{Brown:1986nw,Li:2008dq,Maloney:2009ck}; with logarithmically relaxed
boundary conditions, the same degeneracy admits an additional logarithmic
solution and gives rise to the log-gravity sector
\cite{Grumiller:2008qz,Grumiller:2008es,Skenderis:2009nt}. The present work
is concerned exclusively with the latter setting. In this section we fix
the bulk conventions used throughout the paper, review the distinction
between chiral gravity and log gravity, and introduce the Grumiller--Johansson logarithmic graviton together with the global
$SL(2,\mathbb R)_L\times SL(2,\mathbb R)_R$ generators that will be used
to construct its bulk descendants.

\subsection{Chiral gravity versus log gravity}

The special role of the point $\mu\ell=1$ in TMG was emphasized in \cite{Li:2008dq}. For asymptotically
$\mathrm{AdS}_3$ geometries satisfying the Brown--Henneaux boundary
conditions \cite{Brown:1986nw}, the left- and right-moving central charges
are
\begin{equation}
c_L
=
\frac{3\ell}{2G}
\left(
1-\frac{1}{\mu\ell}
\right),
\qquad
c_R
=
\frac{3\ell}{2G}
\left(
1+\frac{1}{\mu\ell}
\right),
\label{eq:TMG_central_charges}
\end{equation}
so that at the critical point
\begin{equation}
c_L=0,
\qquad
c_R=\frac{3\ell}{G}.
\label{eq:central_charges_chiral_point}
\end{equation}
Under Brown--Henneaux boundary conditions, the left-moving excitations
become pure gauge and the resulting theory is conjectured to define a
chiral quantum theory of gravity \cite{Li:2008dq}. The structure and
consistency of this chiral sector, including the role of the
$\mathrm{AdS}_3$ gravitational path integral and its spectrum, have been
studied extensively in subsequent work
\cite{Maloney:2009ck}.

The critical theory nevertheless admits a second possibility. Precisely
at $\mu\ell=1$, the degeneracy between the massive and left-moving
linearized operators produces an additional solution obtained by taking
the critical limit of the massive graviton, and it was shown that this mode has logarithmic asymptotic behaviour
\cite{Grumiller:2008qz}. It violates the standard Brown--Henneaux
fall-off conditions but is compatible with a consistent logarithmic
relaxation of them. The resulting theory is therefore not chiral gravity
with Brown--Henneaux boundary conditions, but the distinct theory usually
referred to as log gravity
\cite{Grumiller:2008es,Skenderis:2009nt}.

This distinction is essential for the present analysis. The boundary conditions that remove the logarithmic mode as pure gauge are not imposed here.
Instead, the logarithmic solution is retained as a physical element of the
linearized bulk solution space and we work with the logarithmically relaxed
asymptotic conditions appropriate to critical TMG. The resulting
non-diagonalizable structure is the gravitational realization of the
rank-two structure characteristic of logarithmic conformal field theory
(LCFT). Such indecomposable representations and their logarithmic
correlation functions were originally identified in two-dimensional
conformal field theory \cite{Gurarie:1993xq}, while their relation to
AdS/CFT and, in particular, to $\mathrm{AdS}_3/\mathrm{LCFT}_2$ was
developed in early holographic studies
\cite{Kogan:1999bn,Myung:1999nd}. In critical TMG, holographic
renormalization provides a direct realization of this correspondence:
the boundary two-point functions obtained from the bulk theory have the
characteristic LCFT form, with the left-moving stress tensor acquiring
a logarithmic partner and a corresponding new anomaly
\cite{Skenderis:2009nt}.

The starting point of the present paper will be the logarithmic tensor perturbation. This is important
for the analysis below because the logarithmic factor possesses a
nontrivial analytic structure in the complexified radial plane, and the
global conformal generators act as differential operators on the complete
bulk tensor field. Establishing these structures explicitly will allow us
to determine the monodromy of the gravitational descendants directly from
their bulk wavefunctions.

\subsection{Linearized equations at the chiral point}
\label{subsec:linearized_equations}

We now fix the bulk conventions and linearized equations used throughout
the remainder of the paper. Topologically massive gravity with negative
cosmological constant is described by the action
\begin{equation}
S_{\mathrm{TMG}}
=
\frac{1}{16\pi G}
\int d^3x\,\sqrt{-g}
\left(
R+\frac{2}{\ell^2}
\right)
+
\frac{1}{32\pi G\mu}
\int d^3x\,
\varepsilon^{\lambda\mu\nu}
\Gamma^{\rho}{}_{\lambda\sigma}
\left(
\partial_{\mu}\Gamma^{\sigma}{}_{\rho\nu}
+
\frac{2}{3}
\Gamma^{\sigma}{}_{\mu\tau}
\Gamma^{\tau}{}_{\nu\rho}
\right),
\label{eq:TMG_action}
\end{equation}
where \(G\) is Newton's constant, \(\ell\) is the AdS radius, and
\(\mu\) is the topological mass parameter
\cite{Deser:1981wh,Deser:1982vy}. Our convention for the cosmological
constant is
\begin{equation}
\Lambda=-\frac{1}{\ell^2}.
\label{eq:cosmological_constant}
\end{equation}

Variation of Eq.~\eqref{eq:TMG_action} gives the field equations
\begin{equation}
G_{\mu\nu}
-\frac{1}{\ell^2}g_{\mu\nu}
+
\frac{1}{\mu}C_{\mu\nu}
=
0,
\label{eq:TMG_field_equations}
\end{equation}
where
\begin{equation}
C_{\mu\nu}
=
\varepsilon_{\mu}{}^{\alpha\beta}
\nabla_{\alpha}
\left(
R_{\beta\nu}
-\frac{1}{4}g_{\beta\nu}R
\right)
\label{eq:Cotton_tensor}
\end{equation}
is the Cotton tensor. The maximally symmetric
\(\mathrm{AdS}_3\) background \(\bar g_{\mu\nu}\) satisfies
\begin{equation}
\bar R_{\mu\nu}
=
-\frac{2}{\ell^2}\bar g_{\mu\nu},
\qquad
\bar R
=
-\frac{6}{\ell^2},
\qquad
\bar C_{\mu\nu}=0.
\label{eq:AdS3_background_curvature}
\end{equation}

We perturb the metric according to
\begin{equation}
g_{\mu\nu}
=
\bar g_{\mu\nu}
+
h_{\mu\nu},
\label{eq:metric_perturbation}
\end{equation}
where \(h_{\mu\nu}\) is the linearized gravitational field. Throughout
the analysis we work in transverse--traceless gauge,
\begin{equation}
\bar\nabla^{\mu}h_{\mu\nu}=0,
\qquad
\bar g^{\mu\nu}h_{\mu\nu}=0.
\label{eq:TT_gauge}
\end{equation}
Thus the quantities denoted below by
\(\psi_{\mu\nu}^{L}\), \(\psi_{\mu\nu}^{M}\), and
\(\psi_{\mu\nu}^{\log}\) are explicit rank-two metric perturbations
satisfying Eq.~\eqref{eq:TT_gauge}, rather than scalar wavefunctions.

A particularly useful form of the linearized TMG equation is obtained
by introducing the mutually commuting first-order operators
\cite{Li:2008dq,Grumiller:2008qz}
\begin{equation}
\left(\mathcal D^{L/R}\right)_{\mu}{}^{\beta}
=
\delta_{\mu}^{\beta}
\pm
\ell\,
\varepsilon_{\mu}{}^{\alpha\beta}
\bar\nabla_{\alpha},
\label{eq:DL_DR_def}
\end{equation}
and
\begin{equation}
\left(\mathcal D^{M}\right)_{\mu}{}^{\beta}
=
\delta_{\mu}^{\beta}
+
\frac{1}{\mu}
\varepsilon_{\mu}{}^{\alpha\beta}
\bar\nabla_{\alpha}.
\label{eq:DM_def}
\end{equation}
In transverse--traceless gauge, the linearized equations can then be
written in factorized form as
\begin{equation}
\left(
\mathcal D^{L}
\mathcal D^{R}
\mathcal D^{M}
h
\right)_{\mu\nu}
=
0.
\label{eq:factorized_TMG}
\end{equation}
The three first-order factors describe, respectively, the left-moving,
right-moving, and massive branches of the linearized solution space.

For generic \(\mu\ell\), the massive solution is distinct from the two
massless branches. At the chiral point
\begin{equation}
\mu\ell=1,
\end{equation}
however, Eqs.~\eqref{eq:DL_DR_def} and \eqref{eq:DM_def} imply
\begin{equation}
\mathcal D^{M}
=
\mathcal D^{L}.
\label{eq:DM_equals_DL}
\end{equation}
The linearized field equation therefore degenerates to
\begin{equation}
\left(
\mathcal D^{R}
\left(\mathcal D^{L}\right)^2
h
\right)_{\mu\nu}
=
0.
\label{eq:critical_factorized_TMG}
\end{equation}
This repeated first-order factor is the bulk origin of the logarithmic
sector.

The ordinary left-moving graviton \(\psi_{\mu\nu}^{L}\) lies in the
kernel of \(\mathcal D^{L}\),
\begin{equation}
\mathcal D^{L}\psi^{L}=0.
\label{eq:left_mode_DL}
\end{equation}
At the critical point the degeneracy permits a generalized solution
\(\psi_{\mu\nu}^{\log}\) satisfying
\begin{equation}
\left(\mathcal D^{L}\right)^2
\psi^{\log}
=
0,
\qquad
\mathcal D^{L}\psi^{\log}\neq0.
\label{eq:log_generalized_kernel}
\end{equation}
With the Grumiller--Johansson normalization adopted below, the first
action of \(\mathcal D^{L}\) is proportional to the left-moving mode,
\begin{equation}
\mathcal D^{L}\psi^{\log}
\propto
\psi^{L}.
\label{eq:DL_log_proportional}
\end{equation}
We leave the proportionality factor implicit at this stage because the
subsequent monodromy analysis will be performed in the original
normalization of the logarithmic tensor mode, rather than by
rescaling it to an abstract canonical Jordan basis.

Equations~\eqref{eq:left_mode_DL}--\eqref{eq:DL_log_proportional}
exhibit the generalized-kernel structure responsible for the
non-diagonalizable logarithmic sector. The question addressed below is instead how the logarithmic tensor
solution behaves under the differential action of the global
$\mathrm{AdS}_3$ isometries and under analytic continuation in the
complexified radial coordinate. To formulate that problem, we next introduce the Grumiller--Johansson modes and
fix their normalization explicitly.

\subsection{The Grumiller--Johansson logarithmic mode}
\label{subsec:exact_GJ_mode}

We now introduce the bulk solutions that form the starting point
of the analysis. We use global coordinates
\((\tau,\phi,\rho)\), in which the $\mathrm{AdS}_3$ background is
\begin{equation}
ds^2
=
\ell^2
\left(
-\cosh^2\rho\,d\tau^2
+
\sinh^2\rho\,d\phi^2
+
d\rho^2
\right),
\label{eq:global_AdS3_metric}
\end{equation}
with
\begin{equation}
-\infty<\tau<\infty,
\qquad
0\leq\phi<2\pi,
\qquad
0\leq\rho<\infty.
\end{equation}
The conformal boundary is approached as
\(\rho\rightarrow\infty\). It is convenient to introduce the
light-cone coordinates
\begin{equation}
u=\tau+\phi,
\qquad
v=\tau-\phi.
\label{eq:lightcone_coordinates}
\end{equation}
Throughout the bulk analysis, $\tau$ and $\rho$ are initially taken to
be real. Complexification of the radial coordinate will be introduced
only when the analytic continuation and monodromy problem is considered
in Sec.~\ref{sec:radial_monodromy}.

The left-moving graviton at the chiral point is a highest-weight
solution of conformal weights
\begin{equation}
(h,\bar h)=(2,0).
\label{eq:left_weights}
\end{equation}
In the coordinate basis $(\tau,\phi,\rho)$, it may be written as \cite{Grumiller:2008qz}
\begin{equation}
\psi_{\mu\nu}^{L}
=
e^{-2iu}\tanh^2\rho\,
\begin{pmatrix}
1 & 1 & i b(\rho)
\\[1mm]
1 & 1 & i b(\rho)
\\[1mm]
i b(\rho) & i b(\rho) & -b^2(\rho)
\end{pmatrix}_{\mu\nu},
\label{eq:exact_left_graviton}
\end{equation}
where
\begin{equation}
b(\rho)
=
\frac{1}{\sinh\rho\,\cosh\rho}.
\label{eq:a_rho_definition}
\end{equation}
The overall normalization of
Eq.~\eqref{eq:exact_left_graviton} is conventional and will play no role
in the arguments below. What will be important is that
\(\psi_{\mu\nu}^{L}\) is an explicit rank-two tensor field satisfying
the transverse--traceless conditions
\eqref{eq:TT_gauge} and the left-moving first-order Eq. (\ref{eq:left_mode_DL}).

At the critical point, the degeneracy between the massive and
left-moving branches gives rise to the logarithmic mode found in \cite{Grumiller:2008qz}. In the normalization
used in their construction, the solution takes the particularly
simple multiplicative form
\begin{equation}
\psi_{\mu\nu}^{\log}
=
y(\tau,\rho)\,
\psi_{\mu\nu}^{L},
\label{eq:exact_log_graviton}
\end{equation}
where
\begin{equation}
y(\tau,\rho)
=
-i\tau-\ln\cosh\rho.
\label{eq:exact_y}
\end{equation}
Equation~\eqref{eq:exact_log_graviton} will be used throughout the present work.
In particular, the action of the global isometry generators below is
understood as the differential action on the complete tensor field
\(\psi_{\mu\nu}^{\log}\).

The origin of the logarithmic solution can be seen directly from the
critical degeneration of the massive graviton. Let
\(\psi_{\mu\nu}^{M}(\mu\ell)\) denote the massive branch away from the
critical point. As \(\mu\ell\rightarrow1\), this solution becomes
degenerate with the left-moving mode,
\begin{equation}
\psi_{\mu\nu}^{M}
\longrightarrow
\psi_{\mu\nu}^{L}.
\label{eq:massive_left_degeneracy}
\end{equation}
The logarithmic solution is obtained from the first nontrivial
variation of the massive mode at this degeneracy and therefore belongs
to the generalized kernel of the repeated critical operator,
\begin{equation}
\left(\mathcal D^L\right)^2
\psi^{\log}
=
0,
\qquad
\mathcal D^L\psi^{\log}\neq0.
\label{eq:exact_log_generalized_kernel}
\end{equation}
Thus the logarithmic graviton is an independent bulk solution of the
critical linearized equations rather than an additional ordinary
eigenmode of \(\mathcal D^L\).

The same generalized-eigenvector structure is visible under the
zero-mode generators. In the original normalization of
Eq.~\eqref{eq:exact_log_graviton}, one has
\begin{subequations}
\begin{align}
L_0\psi^L
&=
2\psi^L,
&
\bar L_0\psi^L
&=
0,
\label{eq:zero_modes_left}
\\[1mm]
L_0\psi^{\log}
&=
2\psi^{\log}
+
\frac{1}{2}\psi^L,
&
\bar L_0\psi^{\log}
&=
\frac{1}{2}\psi^L.
\label{eq:zero_modes_log_GJ}
\end{align}
\end{subequations}
It follows in particular that the total zero-mode operator
\begin{equation}
D:=L_0+\bar L_0
\label{eq:D_definition}
\end{equation}
acts as
\begin{subequations}
\begin{align}
D\psi^L
&=
2\psi^L,
\label{eq:D_left}
\\
D\psi^{\log}
&=
2\psi^{\log}
+
\psi^L.
\label{eq:D_log}
\end{align}
\end{subequations}
Thus, in the ordered basis
\begin{equation}
\Psi
=
\begin{pmatrix}
\psi^L\\[1mm]
\psi^{\log}
\end{pmatrix},
\end{equation}
the action of \(D\) is represented by
\begin{equation}
D
=
2\,\mathbf 1+N,
\qquad
N
=
\begin{pmatrix}
0&0\\
1&0
\end{pmatrix},
\qquad
N^2=0.
\label{eq:D_Jordan_GJ}
\end{equation}
Equivalently,
\begin{equation}
N\psi^{\log}=\psi^L,
\qquad
N\psi^L=0.
\label{eq:N_primary_action}
\end{equation}
We emphasize that Eqs.~\eqref{eq:zero_modes_left}--\eqref{eq:N_primary_action}
describe the established rank-two logarithmic structure of critical TMG;
this structure is not itself a new result of the present work. We retain
the normalization in Eq.~\eqref{eq:zero_modes_log_GJ} because it follows
directly from the bulk solution and will allow the normalization of
the radial monodromy to be tracked without introducing an intermediate
rescaling of the logarithmic state.

\subsection{Global $SL(2,\mathbb R)_L\times SL(2,\mathbb R)_R$ generators}
\label{subsec:global_generators}

The isometry group of global $\mathrm{AdS}_3$ is
\begin{equation}
SO(2,2)
\simeq
SL(2,\mathbb R)_L
\times
SL(2,\mathbb R)_R.
\label{eq:AdS3_isometry_group}
\end{equation}
In terms of the light-cone coordinates
\begin{equation}
u=\tau+\phi,
\qquad
v=\tau-\phi,
\end{equation}
introduced in Eq.~\eqref{eq:lightcone_coordinates}, a convenient basis
for the left-moving global generators is
\cite{Li:2008dq,Grumiller:2008qz}
\begin{equation}
L_0
=
i\partial_u,
\label{eq:L0_generator}
\end{equation}
together with
\begin{equation}
L_{-1}
=
i e^{-iu}
\left[
\frac{\cosh 2\rho}{\sinh 2\rho}\,\partial_u
-
\frac{1}{\sinh 2\rho}\,\partial_v
+
\frac{i}{2}\,\partial_\rho
\right],
\label{eq:Lminus1_generator}
\end{equation}
and
\begin{equation}
L_{1}
=
i e^{iu}
\left[
\frac{\cosh 2\rho}{\sinh 2\rho}\,\partial_u
-
\frac{1}{\sinh 2\rho}\,\partial_v
-
\frac{i}{2}\,\partial_\rho
\right].
\label{eq:Lplus1_generator}
\end{equation}
The right-moving generators are obtained by interchanging
\(u\leftrightarrow v\),
\begin{equation}
\bar L_0
=
i\partial_v,
\label{eq:Lbar0_generator}
\end{equation}
\begin{equation}
\bar L_{-1}
=
i e^{-iv}
\left[
\frac{\cosh 2\rho}{\sinh 2\rho}\,\partial_v
-
\frac{1}{\sinh 2\rho}\,\partial_u
+
\frac{i}{2}\,\partial_\rho
\right],
\label{eq:Lbarminus1_generator}
\end{equation}
and
\begin{equation}
\bar L_{1}
=
i e^{iv}
\left[
\frac{\cosh 2\rho}{\sinh 2\rho}\,\partial_v
-
\frac{1}{\sinh 2\rho}\,\partial_u
-
\frac{i}{2}\,\partial_\rho
\right].
\label{eq:Lbarplus1_generator}
\end{equation}
With these conventions, the two commuting global conformal algebras
satisfy
\begin{equation}
[L_0,L_{\pm1}]
=
\mp L_{\pm1},
\qquad
[L_1,L_{-1}]
=
2L_0,
\label{eq:left_sl2_algebra}
\end{equation}
and
\begin{equation}
[\bar L_0,\bar L_{\pm1}]
=
\mp\bar L_{\pm1},
\qquad
[\bar L_1,\bar L_{-1}]
=
2\bar L_0,
\label{eq:right_sl2_algebra}
\end{equation}
while
\begin{equation}
[L_a,\bar L_b]=0,
\qquad
a,b=-1,0,1.
\label{eq:left_right_commute}
\end{equation}

It is important to distinguish the coordinate expressions
\eqref{eq:L0_generator}--\eqref{eq:Lbarplus1_generator} from their
action on the gravitational perturbations. Let
\(\xi_a\) and \(\bar\xi_a\) denote the Killing vector fields represented
by \(L_a\) and \(\bar L_a\), respectively. Their action on a rank-two
covariant tensor perturbation is the corresponding Lie derivative,
\begin{equation}
L_a h_{\mu\nu}
\equiv
\mathcal L_{\xi_a}h_{\mu\nu},
\qquad
\bar L_a h_{\mu\nu}
\equiv
\mathcal L_{\bar\xi_a}h_{\mu\nu},
\label{eq:generators_as_Lie_derivatives}
\end{equation}
where
\begin{equation}
\left(\mathcal L_{\xi}h\right)_{\mu\nu}
=
\xi^\alpha\partial_\alpha h_{\mu\nu}
+
h_{\alpha\nu}\partial_\mu\xi^\alpha
+
h_{\mu\alpha}\partial_\nu\xi^\alpha.
\label{eq:tensor_Lie_derivative}
\end{equation}
Equivalently, using the background covariant derivative,
\begin{equation}
\left(\mathcal L_{\xi}h\right)_{\mu\nu}
=
\xi^\alpha\bar\nabla_\alpha h_{\mu\nu}
+
h_{\alpha\nu}\bar\nabla_\mu\xi^\alpha
+
h_{\mu\alpha}\bar\nabla_\nu\xi^\alpha.
\label{eq:covariant_tensor_Lie_derivative}
\end{equation}
Since \(\xi_a\) and \(\bar\xi_a\) are Killing vectors of the background,
their Lie derivatives map solutions of the linearized field equations
to solutions. The global conformal generators therefore act directly on
the bulk gravitational solution space.

For a scalar function \(f\), the same notation reduces to the ordinary
directional derivative,
\begin{equation}
L_a f
=
\xi_a^\alpha\partial_\alpha f,
\qquad
\bar L_a f
=
\bar\xi_a^\alpha\partial_\alpha f.
\label{eq:generator_action_scalar}
\end{equation}
Consequently, for a scalar \(f\) multiplying a tensor perturbation,
the Lie derivative obeys the Leibniz rule
\begin{equation}
L_a(fh_{\mu\nu})
=
(L_a f)h_{\mu\nu}
+
f\,L_a h_{\mu\nu},
\label{eq:Lie_derivative_Leibniz_left}
\end{equation}
and similarly
\begin{equation}
\bar L_a(fh_{\mu\nu})
=
(\bar L_a f)h_{\mu\nu}
+
f\,\bar L_a h_{\mu\nu}.
\label{eq:Lie_derivative_Leibniz_right}
\end{equation}
This elementary identity will become essential when the generators act
on the logarithmic graviton
\begin{equation}
\psi_{\mu\nu}^{\log}
=
y(\tau,\rho)\psi_{\mu\nu}^{L}.
\end{equation}
In particular,
\begin{equation}
L_a\psi_{\mu\nu}^{\log}
=
y\,L_a\psi_{\mu\nu}^{L}
+
(L_a y)\psi_{\mu\nu}^{L},
\label{eq:generator_log_decomposition_left}
\end{equation}
and
\begin{equation}
\bar L_a\psi_{\mu\nu}^{\log}
=
y\,\bar L_a\psi_{\mu\nu}^{L}
+
(\bar L_a y)\psi_{\mu\nu}^{L}.
\label{eq:generator_log_decomposition_right}
\end{equation}
Thus the differential action on the scalar logarithmic coefficient
cannot in general be omitted when constructing descendants of the
logarithmic tensor mode.

The left-moving mode is a global highest-weight state,
\begin{equation}
L_1\psi^L=0,
\qquad
\bar L_1\psi^L=0,
\label{eq:left_highest_weight_conditions}
\end{equation}
with weights given in Eq.~\eqref{eq:left_weights}. The logarithmic mode
satisfies the corresponding generalized highest-weight conditions
\begin{equation}
L_1\psi^{\log}=0,
\qquad
\bar L_1\psi^{\log}=0.
\label{eq:log_highest_weight_conditions}
\end{equation}
These relations can also be seen directly from the logarithmic
coefficient. Using
\begin{equation}
\partial_u y
=
\partial_v y
=
-\frac{i}{2},
\qquad
\partial_\rho y
=
-\tanh\rho,
\label{eq:y_basic_derivatives}
\end{equation}
one finds
\begin{equation}
L_1y
=
\frac{e^{iu}}{2}
\left(
\coth 2\rho
-
\csch 2\rho
-
\tanh\rho
\right)
=
0,
\label{eq:L1_y_zero}
\end{equation}
and likewise
\begin{equation}
\bar L_1y
=
\frac{e^{iv}}{2}
\left(
\coth 2\rho
-
\csch 2\rho
-
\tanh\rho
\right)
=
0,
\label{eq:Lbar1_y_zero}
\end{equation}
where we used
\begin{equation}
\coth 2\rho-\csch 2\rho=\tanh\rho.
\label{eq:hyperbolic_identity}
\end{equation}

The lowering generators behave differently. Their action on the same
scalar coefficient gives
\begin{equation}
L_{-1}y
=
\frac{e^{-iu}}{2}
\left(
\coth 2\rho
-
\csch 2\rho
+
\tanh\rho
\right)
=
e^{-iu}\tanh\rho,
\label{eq:Lminus1_y}
\end{equation}
and
\begin{equation}
\bar L_{-1}y
=
\frac{e^{-iv}}{2}
\left(
\coth 2\rho
-
\csch 2\rho
+
\tanh\rho
\right)
=
e^{-iv}\tanh\rho.
\label{eq:Lbarminus1_y}
\end{equation}
Therefore the lowering generators do not merely multiply the
left-moving descendant by the original logarithmic coefficient: they
also generate additional non-logarithmic bulk contributions.

This observation motivates the analysis developed in
Sec.~\ref{sec:bulk_descendants}. There we will apply
Eqs.~\eqref{eq:generator_log_decomposition_left} and
\eqref{eq:generator_log_decomposition_right} to the complete tensor
field and discuss the structure of the global descendants. Before constructing these descendants, however, we first
analyze the analytic structure of \(y(\tau,\rho)\) under
complexification of the radial coordinate. This is necessary in order
to define the relevant bulk monodromy independently of the abstract
global conformal representation.

\section{Complex Radial Structure and Asymptotic Bulk Monodromy}
\label{sec:radial_monodromy}

The logarithmic graviton introduced in
Sec.~\ref{sec:critical_tmg} contains all of its non-meromorphic radial
dependence through the scalar factor $y(\tau,\rho)$. When the radial dependence is analytically continued into the
complex domain, the function $\ln\cosh\rho$ determines the possible monodromies. The purpose of this section is therefore to
analyze the complex radial structure of the Grumiller--Johansson mode before assigning any monodromy to it. We will
show that the relevant continuation is controlled by a meromorphic
function with distinguished points at $r=0$ and $r=\pm i$, and that the
monodromy depends on the homotopy class of the chosen contour. In
particular, a small loop about the origin and a large asymptotic
$\mathrm{AdS}_3$ loop belong to inequivalent contour classes and induce
opposite logarithmic shifts.

\subsection{The logarithmic factor in the complex $r$-plane}
\label{subsec:log_factor_r_plane}

To expose the radial analytic structure, introduce the exponential
coordinate
\begin{equation}
r=e^\rho,
\qquad
\rho=\log r,
\label{eq:r_exp_rho}
\end{equation}
where $\log r$ denotes a local branch of the complex logarithm. On the
physical radial domain $\rho\geq0$, one has $r\geq1$, while analytic
continuation extends $r$ to the complex plane. Using
\begin{equation}
\cosh\rho
=
\frac{e^\rho+e^{-\rho}}{2},
\label{eq:cosh_rho_exp}
\end{equation}
we obtain
\begin{equation}
\cosh(\log r)
=
\frac{1}{2}
\left(
r+\frac{1}{r}
\right)
=
\frac{r^2+1}{2r}.
\label{eq:cosh_log_r}
\end{equation}
The logarithmic coefficient may therefore be written locally as
\begin{equation}
y(\tau,r)
=
-i\tau
-
\log\!\left(
\frac{r^2+1}{2r}
\right).
\label{eq:y_exact_r_plane}
\end{equation}
This form will be the starting point for the monodromy analysis.

We introduce the meromorphic function
\begin{equation}
F(r)
:=
\frac{r^2+1}{2r},
\label{eq:F_r_definition}
\end{equation}
so that
\begin{equation}
y(\tau,r)
=
-i\tau-\log F(r).
\label{eq:y_F_r}
\end{equation}
The analytic structure relevant to $y$ is consequently determined by
the divisor of $F$. The numerator factorizes as
\begin{equation}
r^2+1
=
(r-i)(r+i),
\label{eq:r2_plus_1_factorization}
\end{equation}
and hence $F$ has simple zeros at
\begin{equation}
r=i,
\qquad
r=-i,
\label{eq:F_zeros}
\end{equation}
and a simple pole at
\begin{equation}
r=0.
\label{eq:F_pole}
\end{equation}
Although $F(r)$ itself is meromorphic and therefore single-valued, the
composition $\log F(r)$ becomes multivalued upon analytic continuation
around its zeros and poles. Thus the finite complex radial plane relevant
to the logarithmic factor is naturally the punctured domain
\begin{equation}
X
=
\mathbb C\setminus\{0,i,-i\}.
\label{eq:punctured_r_plane}
\end{equation}
The points $r=0,\pm i$ are therefore the finite logarithmic branch
points of $y$.

An equivalent local decomposition follows from
Eq.~\eqref{eq:y_exact_r_plane}
\begin{equation}
y(\tau,r)
=
-i\tau
+
\ln2
+
\log r
-
\log(r^2+1),
\label{eq:y_local_log_decomposition}
\end{equation}
or, after factorization,
\begin{equation}
y(\tau,r)
=
-i\tau
+
\ln2
+
\log r
-
\log(r-i)
-
\log(r+i),
\label{eq:y_three_log_decomposition}
\end{equation}
where Eq.~\eqref{eq:y_three_log_decomposition} is understood locally,
with compatible choices of logarithmic branches. This expression makes
visible the relative signs with which the three distinguished points
enter the analytic continuation: the origin occurs with coefficient
$+1$, whereas $r=i$ and $r=-i$ occur with coefficient $-1$.

The same information can be extracted without introducing any explicit
branch cuts. Differentiating Eq.~\eqref{eq:y_F_r} with respect to $r$
gives
\begin{equation}
\frac{\partial y}{\partial r}
=
-\frac{F'(r)}{F(r)}
=
\frac{1}{r}
-
\frac{2r}{r^2+1}
=
\frac{1}{r}
-
\frac{1}{r-i}
-
\frac{1}{r+i}.
\label{eq:dy_dr_first_form}
\end{equation}
Thus the logarithmic differential
\begin{equation}
dy
=
\left[
\frac{1}{r}
-
\frac{1}{r-i}
-
\frac{1}{r+i}
\right]dr
\label{eq:dy_meromorphic_form}
\end{equation}
is a single-valued meromorphic one-form on $X$, with residues
\begin{equation}
\operatorname*{Res}_{r=0}dy=1,
\qquad
\operatorname*{Res}_{r=i}dy=-1,
\qquad
\operatorname*{Res}_{r=-i}dy=-1.
\label{eq:dy_residues}
\end{equation}
This formulation is preferable for the global continuation problem:
the multivaluedness of $y$ is encoded in the periods of the
single-valued one-form $dy$, and no arbitrary placement of branch cuts
is required.

The asymptotic region appears with clarity in the $r$-plane. For
$|r|>1$, Eq.~\eqref{eq:y_exact_r_plane} can be rearranged as
\begin{equation}
y(\tau,r)
=
-i\tau
-\log r
+\ln2
-\log(1+r^{-2}).
\label{eq:y_large_r_exact}
\end{equation}
Since
\begin{equation}
\log(1+r^{-2})
=
r^{-2}
-\frac{1}{2}r^{-4}
+\mathcal O(r^{-6}),
\qquad
|r|\rightarrow\infty,
\label{eq:log_one_plus_rminus2}
\end{equation}
one recovers
\begin{equation}
y(\tau,r)
=
-i\tau-\log r+\ln2
-r^{-2}
+\frac{1}{2}r^{-4}
+\mathcal O(r^{-6}).
\label{eq:y_large_r_expansion}
\end{equation}
The crucial point is that the
continuation in Eq.~\eqref{eq:y_exact_r_plane} contains the additional
finite singular points $r=\pm i$. Consequently, the monodromy of the
asymptotic term $-\log r$ cannot by itself be used to characterize an
arbitrary closed continuation in the full complex radial plane.

We therefore formulate the monodromy directly in terms of the periods
of Eq.~\eqref{eq:dy_meromorphic_form}. In the next subsection, the
residue theorem will give a contour-independent winding-number formula
for the logarithmic shift and will make precise how the three punctures
$r=0,\pm i$ contribute to the result.

\subsection{Branch points and winding-number formula}
\label{subsec:winding_number_formula}

Having identified the finite logarithmic branch points at
$r=0,\pm i$, we now determine the monodromy associated with an
arbitrary closed contour in the punctured radial plane
\begin{equation}
X=\mathbb C\setminus\{0,i,-i\}.
\end{equation}
The advantage of working with the meromorphic differential
\eqref{eq:dy_meromorphic_form} is that the logarithmic shift can be
computed without choosing explicit branch cuts for
$\log F(r)$.

Let $C\subset X$ be a closed, piecewise smooth, positively oriented
contour. Analytic continuation of $y$ along $C$ changes its value by
the period
\begin{equation}
\Delta_C y
:=
\oint_C dy.
\label{eq:monodromy_period_definition}
\end{equation}
Using Eq.~\eqref{eq:dy_meromorphic_form}, this becomes
\begin{equation}
\Delta_C y
=
\oint_C
\left[
\frac{1}{r}
-
\frac{1}{r-i}
-
\frac{1}{r+i}
\right]dr.
\label{eq:monodromy_period_integral}
\end{equation}
For a point $p\notin C$, define the winding number of $C$ about $a$ by
\begin{equation}
w_C(p)
=
\frac{1}{2\pi i}
\oint_C\frac{dr}{r-p}.
\label{eq:winding_number_definition}
\end{equation}
The three terms in Eq.~\eqref{eq:monodromy_period_integral} can
therefore be evaluated separately.

\begin{proposition}[Winding-number formula for the radial monodromy]
\label{prop:winding_monodromy}
Let
\begin{equation}
y(\tau,r)
=
-i\tau
-
\log\!\left(\frac{r^2+1}{2r}\right)
\end{equation}
be the logarithmic coefficient of the 
Grumiller--Johansson mode, and let
$C\subset\mathbb C\setminus\{0,i,-i\}$ be any closed contour.
Then analytic continuation along $C$ produces the additive monodromy
\begin{equation}
\Delta_C y
=
2\pi i
\left[
w_C(0)-w_C(i)-w_C(-i)
\right].
\label{eq:general_winding_monodromy}
\end{equation}
Hence the monodromy depends only on the winding numbers of $C$ around
the three finite punctures and is independent of the particular choice
of logarithmic branch cuts.
\end{proposition}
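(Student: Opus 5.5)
The plan is to reduce the monodromy to a period of the single-valued meromorphic one-form $dy$ of Eq.~\eqref{eq:dy_meromorphic_form} and then evaluate that period term by term, using the definition of the winding number in Eq.~\eqref{eq:winding_number_definition}.

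\textbf{Step 1: monodromy as a period.} On $X=\mathbb C\setminus\{0,i,-i\}$, the function $F(r)=(r^2+1)/(2r)$ is holomorphic and nowhere vanishing. Hence any local branch of $\log F$ continues analytically along any path $\gamma$ in $X$. Any two branches differ locally by a constant in $2\pi i\,\mathbb Z$, so their derivatives agree. It follows that $d\log F = (F'/F)\,dr$ is a globally defined holomorphic one-form on $X$.

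For a closed piecewise smooth contour $C\colon[0,1]\to X$, I would cover $C$ by finitely many discs in $X$ on which a branch exists, using compactness and a Lebesgue number. On each disc the fundamental theorem of calculus applies. Chaining the discs together shows that the terminal value of the continued branch minus the initial value equals $\int_C (F'/F)\,dr$. Since $\tau$ is held fixed, $\Delta_C y=-\oint_C F'/F\,dr=\oint_C dy$. This justifies Eq.~\eqref{eq:monodromy_period_definition}.

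This step is the only one needing care, and I expect it to be the main (if mild) obstacle. The care is needed because the statement allows arbitrary closed contours, not just simple positively oriented ones. The chain-of-discs argument handles this uniformly: it never chooses global branch cuts, only local branches. This is what yields the asserted independence from branch cuts.

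\textbf{Step 2: partial fractions.} By the computation in Eq.~\eqref{eq:dy_dr_first_form},
\[
dy=\Big[\frac1r-\frac1{r-i}-\frac1{r+i}\Big]dr .
\]
Linearity of the contour integral then splits $\oint_C dy$ into three integrals of the form $\oint_C dr/(r-p)$ with $p\in\{0,i,-i\}$. None of these points lies on $C$, so by Eq.~\eqref{eq:winding_number_definition} each integral equals $2\pi i\,w_C(p)$. Collecting the coefficients $+1,-1,-1$ gives
\[
\Delta_C y = 2\pi i\,\big[w_C(0)-w_C(i)-w_C(-i)\big].
\]

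\textbf{Step 3: the claimed consequences.} The right-hand side depends only on the three integers $w_C(p)$, each of which is invariant under homotopy in $X$. It also makes no reference to any choice of branch. I would note in passing that the coefficients are exactly the residues listed in Eq.~\eqref{eq:dy_residues}. For a simple positively oriented contour, the result is therefore the residue theorem: $2\pi i$ times the sum of the residues of $dy$ at the enclosed punctures. As a check, a loop enclosing all three punctures gives $2\pi i(1-1-1)=-2\pi i$, matching Eq.~\eqref{eq:intro_asymptotic_shift}.
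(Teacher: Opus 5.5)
Your proposal is correct and follows the paper's argument: the paper's proof is exactly your Step 2, which writes $dy=\big[\tfrac1r-\tfrac1{r-i}-\tfrac1{r+i}\big]dr$ and splits $\oint_C dy$ into three integrals, each equal to $2\pi i$ times a winding number. Your Step 1 gives a chain-of-discs justification that continuing $y$ along $C$ changes it by $\oint_C dy$; the paper simply takes this as the definition $\Delta_C y:=\oint_C dy$, so your step adds rigor without changing the route.
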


\begin{proof}
From Eq.~\eqref{eq:dy_meromorphic_form},
\begin{equation}
\Delta_C y
=
\oint_C\frac{dr}{r}
-
\oint_C\frac{dr}{r-i}
-
\oint_C\frac{dr}{r+i}
=
2\pi i\,w_C(0)
-
2\pi i\,w_C(i)
-
2\pi i\,w_C(-i),
\end{equation}
which immediately gives
Eq.~\eqref{eq:general_winding_monodromy}.
\end{proof}

The same result admits a useful interpretation in terms of the argument
principle. Since
\begin{equation}
y=-i\tau-\log F(r),
\qquad
F(r)=\frac{r^2+1}{2r},
\end{equation}
we have
\begin{equation}
dy=-d\log F(r).
\label{eq:dy_dlogF}
\end{equation}
For a closed contour $C$ on which $F$ has neither zeros nor poles,
\begin{equation}
\frac{1}{2\pi i}
\oint_C d\log F
=
N_Z(C)-N_P(C),
\label{eq:argument_principle_F}
\end{equation}
where $N_Z(C)$ and $N_P(C)$ denote, with winding multiplicities, the
numbers of zeros and poles of $F$ enclosed by the contour. It follows
that
\begin{equation}
\Delta_C y
=
-2\pi i
\left[
N_Z(C)-N_P(C)
\right].
\label{eq:argument_principle_monodromy}
\end{equation}
For the present function, the two simple zeros at $r=\pm i$ contribute
with the opposite sign to the simple pole at $r=0$. Thus the
monodromy is controlled by the divisor of the radial function
$F(r)$ rather than by the asymptotic term $\log r$ alone.

In particular, let $\gamma_0$, $\gamma_+$, and $\gamma_-$ denote
small positively oriented loops enclosing only $0$, $i$, and $-i$,
respectively. Proposition~\ref{prop:winding_monodromy} gives
\begin{equation}
\Delta_{\gamma_0}y=2\pi i,
\qquad
\Delta_{\gamma_+}y=-2\pi i,
\qquad
\Delta_{\gamma_-}y=-2\pi i.
\label{eq:elementary_monodromies}
\end{equation}
The three elementary continuations therefore do not induce identical
logarithmic shifts.

More generally, analytic continuation defines an additive monodromy
homomorphism
\begin{equation}
\mathfrak m:
\pi_1(X)
\longrightarrow
2\pi i\,\mathbb Z,
\label{eq:monodromy_homomorphism}
\end{equation}
whose action on a homotopy class $[C]$ is
\begin{equation}
\mathfrak m([C])
=
2\pi i
\left[
w_C(0)-w_C(i)-w_C(-i)
\right].
\label{eq:monodromy_homomorphism_explicit}
\end{equation}
The logarithmic continuation therefore factors through the
abelianization of the fundamental group: although
$\pi_1(X)$ retains the full information about the ordering of loops
around the three punctures, the additive logarithmic monodromy depends
only on their integer winding numbers.

An immediate consequence is that enclosing logarithmic branch points
does not necessarily imply nontrivial total monodromy. For example, a
positively oriented contour enclosing $r=0$ and $r=i$, but not
$r=-i$, has
\begin{equation}
w_C(0)=1,
\qquad
w_C(i)=1,
\qquad
w_C(-i)=0,
\end{equation}
and hence
\begin{equation}
\Delta_C y=0.
\label{eq:cancelling_monodromy_example}
\end{equation}
The contributions from the pole and zero of $F$ cancel exactly.
This cancellation illustrates why the multivaluedness of an isolated
$\log r$ term does not characterize the global analytic continuation
of the logarithmic graviton.

Equation~\eqref{eq:general_winding_monodromy} provides the general
monodromy law, but the contour relevant to the asymptotic
$\mathrm{AdS}_3$ region has not yet been selected. We address this
point next by comparing circles of radius $R<1$ and $R>1$. The two
classes are separated by the singular points $r=\pm i$ lying on the
unit circle and therefore cannot be continuously deformed into one
another within $X$.

\subsection{Small-radius and asymptotic contour classes}
\label{subsec:small_large_contours}

We now specialize the general winding-number formula
\eqref{eq:general_winding_monodromy} to circular contours centered at
the origin. For fixed radius \(R>0\), let
\begin{equation}
C_R:
\qquad
r(\theta)=Re^{i\theta},
\qquad
0\leq\theta\leq2\pi,
\label{eq:circular_contour}
\end{equation}
with counterclockwise orientation. Since the two finite zeros of
\(F(r)\) lie at
\begin{equation}
|i|=|-i|=1,
\end{equation}
the topology of \(C_R\) relative to the punctures changes precisely
when \(R\) crosses the unit circle. There are therefore two distinct
admissible regimes,
\begin{equation}
0<R<1,
\qquad\text{and}\qquad
R>1,
\label{eq:two_radial_regimes}
\end{equation}
while \(R=1\) is excluded because the contour passes through the
singular points \(r=\pm i\).

For \(0<R<1\), the contour \(C_R\) encloses the pole at the origin but
neither of the zeros at \(r=\pm i\). Hence
\begin{equation}
w_{C_R}(0)=1,
\qquad
w_{C_R}(i)=0,
\qquad
w_{C_R}(-i)=0.
\label{eq:small_circle_windings}
\end{equation}
Proposition~\ref{prop:winding_monodromy} therefore gives
\begin{equation}
\Delta_{C_R}y=2\pi i,
\qquad
0<R<1.
\label{eq:small_circle_monodromy}
\end{equation}
Thus a sufficiently small counterclockwise loop about \(r=0\) produces
the positive logarithmic shift
\begin{equation}
y\longmapsto y+2\pi i.
\label{eq:small_circle_y_shift}
\end{equation}

The situation is different for \(R>1\). A circle of radius greater
than unity encloses all three finite punctures,
\begin{equation}
r=0,\qquad r=i,\qquad r=-i,
\end{equation}
and therefore
\begin{equation}
w_{C_R}(0)
=
w_{C_R}(i)
=
w_{C_R}(-i)
=
1.
\label{eq:large_circle_windings}
\end{equation}
The general formula then yields
\begin{equation}
\Delta_{C_R}y
=
2\pi i(1-1-1) =
-2\pi i.
\end{equation}
Hence
\begin{equation}
\Delta_{C_R}y=-2\pi i,
\qquad
R>1,
\label{eq:large_circle_monodromy}
\end{equation}
or equivalently,
\begin{equation}
y\longmapsto y-2\pi i.
\label{eq:large_circle_y_shift}
\end{equation}

The opposite signs in
Eqs.~\eqref{eq:small_circle_monodromy} and
\eqref{eq:large_circle_monodromy} do not represent an ambiguity in the
choice of logarithmic branch. They correspond to genuinely different
homotopy classes in the punctured radial plane. Indeed, a continuous
deformation of \(C_R\) from \(R<1\) to \(R>1\) would necessarily pass
through \(R=1\), where
\begin{equation}
C_1\cap\{i,-i\}
=
\{i,-i\}.
\label{eq:unit_circle_intersection}
\end{equation}
Since the logarithmic coefficient is singular at these points, such a
deformation is not allowed within
\begin{equation}
X=\mathbb C\setminus\{0,i,-i\}.
\end{equation}
The small-radius and large-radius circles therefore cannot be
continuously deformed into one another while remaining in the domain of
analytic continuation.

The physically relevant contour for the asymptotic
\(\mathrm{AdS}_3\) analysis belongs to the second class. We therefore define the positively oriented asymptotic radial contour
\(C_\infty\) as any sufficiently large circle in this homotopy class.
Its logarithmic shift is
\begin{equation}
\Delta_{\infty}y=-2\pi i.
\label{eq:asymptotic_y_monodromy}
\end{equation}
Fig. \ref{fig:small_large_radial_contours} shows the two circular contour classes. 

\begin{figure}[htbp]
\centering

\begin{minipage}{0.47\textwidth}
\centering

\begin{tikzpicture}[scale=1.8]

\draw[->] (-1.7,0) -- (1.7,0)
    node[right] {$\operatorname{Re}(r)$};

\draw[->] (0,-1.7) -- (0,1.7)
    node[above] {$\operatorname{Im}(r)$};

\filldraw (0,0) circle (1.5pt)
    node[below right] {$r=0$};

\filldraw (0,1) circle (1.5pt)
    node[right] {$r=+i$};

\filldraw (0,-1) circle (1.5pt)
    node[right] {$r=-i$};

\draw[dashed] (0,0) circle (1);

\draw[thick] (0,0) circle (0.58);

\draw[->,thick]
    (0.58,0)
    arc[start angle=0,end angle=55,radius=0.58];

\node at (0.78,0.62) {$C_{\rm small}$};

\node[align=center] at (1,-1.45)
{
$0<R<1$\\[1mm]
$(w_0,w_i,w_{-i})=(1,0,0)$
};

\end{tikzpicture}

\vspace{1mm}

\[
\Delta_{C_{\rm small}}y
=
2\pi i
\]

\textbf{Local (small) contour}

\end{minipage}
\hfill
\begin{minipage}{0.47\textwidth}
\centering

\begin{tikzpicture}[scale=1.8]

\draw[->] (-1.7,0) -- (1.7,0)
    node[right] {$\operatorname{Re}(r)$};

\draw[->] (0,-1.7) -- (0,1.7)
    node[above] {$\operatorname{Im}(r)$};

\filldraw (0,0) circle (1.5pt)
    node[below right] {$r=0$};

\filldraw (0,1) circle (1.5pt)
    node[right] {$r=+i$};

\filldraw (0,-1) circle (1.5pt)
    node[right] {$r=-i$};

\draw[dashed] (0,0) circle (1);

\draw[thick] (0,0) circle (1.38);

\draw[->,thick]
    (1.38,0)
    arc[start angle=0,end angle=35,radius=1.38];

\node at (1.08,1.10) {$C_{\infty}$};

\node[align=center] at (1,-1.62)
{
$R>1$\\[1mm]
$(w_0,w_i,w_{-i})=(1,1,1)$
};

\end{tikzpicture}

\vspace{1mm}

\[
\Delta_{C_{\infty}}y
=
-2\pi i
\]

\textbf{Asymptotic (large) contour}

\end{minipage}

\caption{
Distinct homotopy classes of radial contours in the complex
$r$-plane for the logarithmic factor
$y(\tau,r)=-i\tau-\log[(r^2+1)/(2r)]$.
The small contour $C_{\rm small}$ encloses only the pole at
$r=0$, whereas the asymptotic contour $C_{\infty}$ encloses
$r=0$ and the two zeros $r=\pm i$.
Consequently,
$\Delta_{C_{\rm small}}y=+2\pi i$ while
$\Delta_{C_{\infty}}y=-2\pi i$.
The dashed circle indicates $|r|=1$, on which the points
$r=\pm i$ lie.
}
\label{fig:small_large_radial_contours}

\end{figure}

In the following subsection we apply this result to the complete
logarithmic tensor perturbation and determine the corresponding
asymptotic monodromy action on the Grumiller--Johansson pair.

\subsection{Asymptotic monodromy of the logarithmic primary}
\label{subsec:primary_asymptotic_monodromy}

Let
\(C_\infty\) denote a positively oriented large-radius contour in the
homotopy class \(R>1\), and let \(M_\infty\) denote analytic
continuation of a bulk solution along this contour. From
Eq.~\eqref{eq:asymptotic_y_monodromy}, the logarithmic
coefficient obeys
\begin{equation}
M_\infty y
=
y-2\pi i.
\label{eq:M_infty_y}
\end{equation}
To determine the monodromy of
\(\psi_{\mu\nu}^{\log}=y\psi_{\mu\nu}^{L}\), it remains to establish
the analytic behaviour of the left-moving tensor itself.

The radial dependence of the left-moving mode
\eqref{eq:exact_left_graviton} is built from
\(\tanh\rho\) and
\(b(\rho)=(\sinh\rho\cosh\rho)^{-1}\). Under
\begin{equation}
r=e^\rho,
\end{equation}
these functions become
\begin{equation}
\tanh\rho
=
\frac{r^2-1}{r^2+1},
\label{eq:tanh_rational_r}
\end{equation}
and
\begin{equation}
b(\rho)
=
\frac{1}{\sinh\rho\cosh\rho}
=
\frac{4r^2}{r^4-1}.
\label{eq:a_rational_r}
\end{equation}
Hence the radial coefficients appearing in
\(\psi_{\mu\nu}^{L}\) are rational functions of \(r\). More
explicitly,
\begin{equation}
\tanh^2\rho
=
\frac{(r^2-1)^2}{(r^2+1)^2},
\label{eq:tanh_squared_rational_r}
\end{equation}
so that every component of the tensor matrix in
Eq.~\eqref{eq:exact_left_graviton} is meromorphic in the complex
\(r\)-plane.

For the radial continuation considered here, the boundary coordinates
\(\tau\) and \(\phi\), and therefore \(u=\tau+\phi\), are held fixed.
The factor
\begin{equation}
e^{-2iu}
\end{equation}
is consequently unchanged along \(C_\infty\). Since a meromorphic
function is single-valued under analytic continuation around a closed
contour that avoids its poles, the left-moving tensor has no
logarithmic monodromy, i.e
\begin{equation}
M_\infty\psi_{\mu\nu}^{L}
=
\psi_{\mu\nu}^{L}.
\label{eq:left_primary_monodromy}
\end{equation}
Possible poles of the meromorphic continuation do not themselves
produce additive logarithmic shifts; the contour \(C_\infty\) is
understood not to pass through any singular point of the analytically
continued tensor field.

We may now continue the logarithmic solution
\begin{equation}
\psi_{\mu\nu}^{\log}
=
y\,\psi_{\mu\nu}^{L}.
\end{equation}
Using Eqs.~\eqref{eq:M_infty_y} and
\eqref{eq:left_primary_monodromy}, one obtains
\begin{equation}
M_\infty\psi_{\mu\nu}^{\log}
=
(M_\infty y)
(M_\infty\psi_{\mu\nu}^{L})
=
(y-2\pi i)\psi_{\mu\nu}^{L}
=
\psi_{\mu\nu}^{\log}
-
2\pi i\,\psi_{\mu\nu}^{L}.
\end{equation}
Therefore the asymptotic bulk monodromy of the
Grumiller--Johansson pair is
\begin{equation}
M_\infty\psi_{\mu\nu}^{L}
=
\psi_{\mu\nu}^{L},
\qquad
M_\infty\psi_{\mu\nu}^{\log}
=
\psi_{\mu\nu}^{\log}
-
2\pi i\,\psi_{\mu\nu}^{L}.
\label{eq:primary_monodromy_pair}
\end{equation}
The coefficient and sign in
Eq.~\eqref{eq:primary_monodromy_pair} are fixed by the radial
function and by the positively oriented asymptotic contour. In
particular, they are not introduced through a choice of canonical
Jordan normalization.

In the ordered basis
\begin{equation}
\Psi_{\mu\nu}
=
\begin{pmatrix}
\psi_{\mu\nu}^{L}
\\[1mm]
\psi_{\mu\nu}^{\log}
\end{pmatrix},
\label{eq:primary_pair_vector}
\end{equation}
the continuation is represented by the unipotent matrix
\begin{equation}
M_\infty
=
\begin{pmatrix}
1&0
\\[1mm]
-2\pi i&1
\end{pmatrix}.
\label{eq:primary_monodromy_matrix}
\end{equation}
Thus
\begin{equation}
(M_\infty-\mathbf 1)^2=0,
\qquad
M_\infty\neq\mathbf 1,
\label{eq:primary_unipotent_property}
\end{equation}
on the two-dimensional logarithmic solution space.

The relation between this unipotent continuation and the established
Jordan structure can be made explicit. Recall from
Eq.~\eqref{eq:N_primary_action} that the nilpotent map \(N\) acts as
\begin{equation}
N\psi^{\log}=\psi^L,
\qquad
N\psi^L=0,
\qquad
N^2=0.
\label{eq:N_primary_recalled}
\end{equation}
It follows immediately that
\begin{equation}
e^{-2\pi iN}
=
\mathbf 1-2\pi iN,
\label{eq:exp_nilpotent_primary}
\end{equation}
and hence
\begin{equation}
M_\infty
=
e^{-2\pi iN},
\label{eq:M_exp_N_primary}
\end{equation}
as an operator on the primary logarithmic pair.

The result \eqref{eq:primary_monodromy_pair} concerns the primary
logarithmic pair only. It does not by itself imply that the same
monodromy acts uniformly on the descendant space. Establishing that
statement requires the explicit bulk descendants, because the
lowering generators differentiate the logarithmic coefficient and
generate additional terms according to
Eqs.~\eqref{eq:generator_log_decomposition_left} and
\eqref{eq:generator_log_decomposition_right}. We therefore turn next
to the differential construction of the logarithmic descendants.

\section{Bulk Logarithmic Descendants}
\label{sec:bulk_descendants}

The asymptotic monodromy derived in
Sec.~\ref{sec:radial_monodromy} was obtained directly from the
Grumiller--Johansson primary pair. We now ask whether this bulk
analytic structure persists under the action of the global lowering
generators. This question cannot be answered merely by invoking the
Virasoro commutator with the zero modes: the generators
$L_{-1}$ and $\bar L_{-1}$ act as differential operators on the
complete tensor perturbation and therefore differentiate both the
left-moving tensor and its logarithmic coefficient. Starting from
Eq. (\ref{eq:intro_exact_log_mode}), we construct the descendants directly by Lie differentiation. We will
show that each logarithmic descendant separates naturally into a term
containing the original logarithmic factor $y$ multiplying the
corresponding left-moving descendant and an additional term generated
by differentiating $y$. The latter contains no logarithm. This
decomposition will provide the bulk input required in
Sec.~\ref{sec:descendant_monodromy} to determine the monodromy of the
entire global descendant space without assuming its form in advance.

\subsection{Differential action of $L_{-1}$}
\label{subsec:Lminus1_action}

We begin with the left-moving lowering generator. Define the first
left-moving descendants by
\begin{equation}
\psi_{\mu\nu;1,0}^{L}
:=
L_{-1}\psi_{\mu\nu}^{L},
\qquad
\psi_{\mu\nu;1,0}^{\log}
:=
L_{-1}\psi_{\mu\nu}^{\log},
\label{eq:first_left_descendants_definition}
\end{equation}
where, as specified in
Eq.~\eqref{eq:generators_as_Lie_derivatives}, the action of
$L_{-1}$ on a tensor field denotes the Lie derivative along the
corresponding $\mathrm{AdS}_3$ Killing vector.

Applying $L_{-1}$ to the logarithmic solution and using the Leibniz rule gives
\begin{equation}
\psi_{\mu\nu;1,0}^{\log}
=
L_{-1}\!\left(
y\,\psi_{\mu\nu}^{L}
\right)
=
y\,L_{-1}\psi_{\mu\nu}^{L}
+
(L_{-1}y)\psi_{\mu\nu}^{L}.
\label{eq:Lminus1_log_Leibniz}
\end{equation}
In terms of the left-moving descendant defined in
Eq.~\eqref{eq:first_left_descendants_definition},
\begin{equation}
\psi_{\mu\nu;1,0}^{\log}
=
y\,\psi_{\mu\nu;1,0}^{L}
+
(L_{-1}y)\psi_{\mu\nu}^{L}.
\label{eq:Lminus1_log_preliminary}
\end{equation}
The second term arises here because the lowering generator acts on the explicit coordinate dependence of the bulk logarithmic solution.

To evaluate this contribution, we use the fact that $\tau=(u+v)/2$ to rewrite the logarithmic coefficient as
\begin{equation}
y
=
-\frac{i}{2}(u+v)
-
\ln\cosh\rho,
\label{eq:y_uvrho}
\end{equation}  
and apply Eq.~\eqref{eq:Lminus1_generator} as
\begin{equation}
L_{-1}y
=
i e^{-iu}
\left[
-\frac{i}{2}\coth 2\rho
+
\frac{i}{2}\csch 2\rho
-
\frac{i}{2}\tanh\rho
\right]
=
\frac{e^{-iu}}{2}
\left[
\coth 2\rho
-
\csch 2\rho
+
\tanh\rho
\right],
\label{eq:Lminus1_y_intermediate}
\end{equation}
obtaining
\begin{equation}
L_{-1}y
=
e^{-iu}\tanh\rho.
\label{eq:Lminus1_y_section4}
\end{equation}
The first left-moving logarithmic descendant consequently takes the form
\begin{equation}
\psi_{\mu\nu;1,0}^{\log}
=
y\,\psi_{\mu\nu;1,0}^{L}
+
e^{-iu}\tanh\rho\,
\psi_{\mu\nu}^{L}
=
y\,\psi_{\mu\nu;1,0}^{L}
+
\chi_{\mu\nu;1,0},
\label{eq:first_left_log_descendant}
\end{equation}
as we isolate the additional term by defining
\begin{equation}
\chi_{\mu\nu;1,0}
:=
e^{-iu}\tanh\rho\,
\psi_{\mu\nu}^{L}.
\label{eq:chi_10_definition}
\end{equation}
Equation~\eqref{eq:first_left_log_descendant}  separates the two analytically distinct pieces of
the descendant: all explicit logarithmic dependence remains contained
in the first term, whereas $\chi_{\mu\nu;1,0}$ is generated entirely
by the differential action of $L_{-1}$ on $y$. From Eq.~\eqref{eq:tanh_rational_r}, it appears that $L_{-1}y$ is meromorphic in $r$ and contains no logarithmic
branch. Since $\psi_{\mu\nu}^{L}$ is itself meromorphic under the
radial complexification, as established in
Sec.~\ref{subsec:primary_asymptotic_monodromy}, it follows that
$\chi_{\mu\nu;1,0}$ is meromorphic in $r$. In particular, the additional term produced by the differential action does not introduce a new additive logarithmic monodromy, and is essential for tracking the analytic structure of the complete bulk tensor descendant.

The right-moving lowering generator produces an analogous, but
independent, contribution. We evaluate that action next before
combining the two directions into the general bi-graded descendant
family.

\subsection{Differential action of $\bar L_{-1}$}
\label{subsec:Lbarminus1_action}

We now repeat the construction in the right-moving lowering direction.
Define
\begin{equation}
\psi_{\mu\nu;0,1}^{L}
:=
\bar L_{-1}\psi_{\mu\nu}^{L},
\qquad
\psi_{\mu\nu;0,1}^{\log}
:=
\bar L_{-1}\psi_{\mu\nu}^{\log}.
\label{eq:first_right_descendants_definition}
\end{equation}
Although the primary left-moving mode has conformal weights
$(h,\bar h)=(2,0)$, the action of $\bar L_{-1}$ on the bulk tensor
field need not vanish. The global descendant space considered below
is therefore naturally bi-graded by the numbers of $L_{-1}$ and
$\bar L_{-1}$ actions.

Using again the logarithmic solution, the tensorial Leibniz rule gives
\begin{equation}
\psi_{\mu\nu;0,1}^{\log}
=
\bar L_{-1}
\left(
y\,\psi_{\mu\nu}^{L}
\right)
=
y\,\bar L_{-1}\psi_{\mu\nu}^{L}
+
(\bar L_{-1}y)\psi_{\mu\nu}^{L}
=
y\,\psi_{\mu\nu;0,1}^{L}
+
(\bar L_{-1}y)\psi_{\mu\nu}^{L},
\label{eq:Lbarminus1_log_Leibniz}
\end{equation}
and the first right-moving logarithmic descendant is therefore
\begin{equation}
\psi_{\mu\nu;0,1}^{\log}
=
y\,\psi_{\mu\nu;0,1}^{L}
+
e^{-iv}\tanh\rho\,
\psi_{\mu\nu}^{L}
=
y\,\psi_{\mu\nu;0,1}^{L}
+
\chi_{\mu\nu;0,1},
\label{eq:first_right_log_descendant}
\end{equation}
if in direct analogy with the left-moving case, we define
\begin{equation}
\chi_{\mu\nu;0,1}
:=
e^{-iv}\tanh\rho\,
\psi_{\mu\nu}^{L}.
\label{eq:chi_01_definition}
\end{equation}
As for $\chi_{\mu\nu;1,0}$, the additional term contains no explicit
logarithmic factor, and since the left-moving tensor
$\psi_{\mu\nu}^{L}$ is also meromorphic in the complex radial
coordinate, it follows that $\chi_{\mu\nu;0,1}$ is meromorphic in $r$.

The differential actions of $L_{-1}$ and $\bar L_{-1}$ thus have the
same structural effect: each preserves a single copy of the original
logarithmic coefficient multiplying the corresponding ordinary
descendant and generates an additional log-free term.


\subsection{General descendant decomposition}
\label{subsec:general_descendant_decomposition}

We now extend the level-one structure established in the preceding two subsections to arbitrary global descendant 
level. We adopt the bi-graded notation
\begin{equation}
\psi_{\mu\nu;m,n}^{L}
:=
L_{-1}^{\,m}\bar L_{-1}^{\,n}\psi_{\mu\nu}^{L},
\qquad
\psi_{\mu\nu;m,n}^{\log}
:=
L_{-1}^{\,m}\bar L_{-1}^{\,n}\psi_{\mu\nu}^{\log},
\label{eq:general_descendants_recalled}
\end{equation}
with
\begin{equation}
m,n\in\mathbb Z_{\geq0}.
\end{equation}
The primary logarithmic field satisfies
\begin{equation}
\psi_{\mu\nu;0,0}^{\log}
=
y\,\psi_{\mu\nu;0,0}^{L},
\end{equation}
so that it is natural to set
\begin{equation}
\chi_{\mu\nu;0,0}=0.
\label{eq:chi00_general}
\end{equation}

\begin{proposition}[Decomposition of global logarithmic descendants]
\label{prop:general_descendant_decomposition}
For every pair
\begin{equation}
(m,n)\in\mathbb Z_{\geq0}^2,
\end{equation}
the logarithmic bulk descendant admits the decomposition
\begin{equation}
\psi_{\mu\nu;m,n}^{\log}
=
y\,\psi_{\mu\nu;m,n}^{L}
+
\chi_{\mu\nu;m,n},
\label{eq:general_descendant_decomposition}
\end{equation}
where the remainder $\chi_{\mu\nu;m,n}$ satisfies the recursion relations
\begin{equation}
\chi_{\mu\nu;m+1,n}
=
L_{-1}\chi_{\mu\nu;m,n}
+
(L_{-1}y)\,
\psi_{\mu\nu;m,n}^{L},
\label{eq:chi_left_recursion}
\end{equation}
and
\begin{equation}
\chi_{\mu\nu;m,n+1}
=
\bar L_{-1}\chi_{\mu\nu;m,n}
+
(\bar L_{-1}y)\,
\psi_{\mu\nu;m,n}^{L}.
\label{eq:chi_right_recursion}
\end{equation}
\end{proposition}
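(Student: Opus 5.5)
Since the statement defines $\chi_{\mu\nu;m,n}$ only through the two recursions \eqref{eq:chi_left_recursion}--\eqref{eq:chi_right_recursion} together with the initial value \eqref{eq:chi00_general}, I would argue by induction on the total level $m+n$. The real content is twofold: the decomposition \eqref{eq:general_descendant_decomposition} must be preserved by one further application of $L_{-1}$ or $\bar L_{-1}$, and the two recursions must give the same $\chi$ however a bi-grade $(m,n)$ is reached. The base case $(m,n)=(0,0)$ is just $\psi^{\log}_{\mu\nu}=y\psi^{L}_{\mu\nu}$ with $\chi_{\mu\nu;0,0}=0$. The level-one cases have already been checked explicitly in Eqs.~\eqref{eq:first_left_log_descendant} and \eqref{eq:first_right_log_descendant}; they agree with the recursions because $L_{-1}\chi_{0,0}=0$.

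For the inductive step, assume \eqref{eq:general_descendant_decomposition} holds at $(m,n)$. Since $L_{-1}$ commutes with $\bar L_{-1}$ by Eq.~\eqref{eq:left_right_commute}, and $\psi^{\log}_{m,n}$ is defined with all $L_{-1}$'s to the left, we have $\psi^{\log}_{m+1,n}=L_{-1}\psi^{\log}_{m,n}$. Applying $L_{-1}$ to both sides of the hypothesis and using linearity together with the tensorial Leibniz rule \eqref{eq:Lie_derivative_Leibniz_left} for the scalar $y$ gives
\begin{equation}
\psi^{\log}_{\mu\nu;m+1,n}
=
y\,L_{-1}\psi^{L}_{\mu\nu;m,n}
+(L_{-1}y)\,\psi^{L}_{\mu\nu;m,n}
+L_{-1}\chi_{\mu\nu;m,n}.
\end{equation}
The first term equals $y\,\psi^L_{\mu\nu;m+1,n}$, and the remaining two terms are exactly the right-hand side of \eqref{eq:chi_left_recursion}. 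For the right-moving direction, commutativity gives $\psi^{\log}_{m,n+1}=L_{-1}^m\bar L_{-1}^{n+1}\psi^{\log}=\bar L_{-1}\psi^{\log}_{m,n}$, and the same computation with \eqref{eq:Lie_derivative_Leibniz_right} yields \eqref{eq:chi_right_recursion}.

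The step I expect to be the real obstacle is showing that the recursions are \emph{consistent}, i.e.\ that the two routes from $(m,n)$ to $(m+1,n+1)$ produce the same $\chi$. I would settle this by avoiding the issue rather than computing it. Define $\chi_{\mu\nu;m,n}:=\psi^{\log}_{\mu\nu;m,n}-y\,\psi^{L}_{\mu\nu;m,n}$ directly, which is unambiguous because the operators commute. The computation above then \emph{derives} both recursions as identities satisfied by this single well-defined object, so no path-dependence can arise. As a sanity check, one can also verify consistency directly: the cross terms match provided $L_{-1}\bar L_{-1}y=\bar L_{-1}L_{-1}y$, which holds because $[L_{-1},\bar L_{-1}]=0$ as vector fields acting on scalars.

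Finally, although the proposition does not state it, the later monodromy analysis needs $\chi_{\mu\nu;m,n}$ to be log-free and meromorphic in $r$. This follows from the same induction. $L_{-1}y=e^{-iu}\tanh\rho$ and $\bar L_{-1}y=e^{-iv}\tanh\rho$ are rational in $r$. The Killing-vector components $\coth2\rho$, $\csch2\rho$ and the factor multiplying $\partial_\rho$ are rational in $r$, since $\partial_\rho=r\partial_r$. Hence $L_{-1}$ and $\bar L_{-1}$ map tensors whose components are meromorphic in $r$ (times exponentials in $u,v$) to tensors of the same kind. In particular $\psi^L_{m,n}$ and, inductively, $\chi_{m,n}$ remain meromorphic.
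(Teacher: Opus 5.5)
Your proposal is correct and follows essentially the same route as the paper: induction from $\chi_{\mu\nu;0,0}=0$ using the tensorial Leibniz rule under $L_{-1}$ and $\bar L_{-1}$, with $[L_{-1},\bar L_{-1}]=0$ ensuring the result does not depend on the order in which the two directions are generated. Defining $\chi_{\mu\nu;m,n}:=\psi^{\log}_{\mu\nu;m,n}-y\,\psi^{L}_{\mu\nu;m,n}$ directly simply makes explicit the paper's one-line order-independence remark, and your closing meromorphicity argument reproduces the paper's separate Proposition~\ref{prop:remainder_meromorphic}.
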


\begin{proof}
The statement holds at $(m,n)=(0,0)$ because
\begin{equation}
\psi_{\mu\nu}^{\log}
=
y\,\psi_{\mu\nu}^{L},
\qquad
\chi_{\mu\nu;0,0}=0.
\end{equation}
Assume now that for some $(m,n)$ one has
\begin{equation}
\psi_{\mu\nu;m,n}^{\log}
=
y\,\psi_{\mu\nu;m,n}^{L}
+
\chi_{\mu\nu;m,n}.
\label{eq:induction_hypothesis}
\end{equation}
Acting with $L_{-1}$ and using the tensorial Leibniz rule gives
\begin{subequations}
\begin{align}
\psi_{\mu\nu;m+1,n}^{\log}
&=
L_{-1}\psi_{\mu\nu;m,n}^{\log}
\\[1mm]
&=
L_{-1}
\left(
y\,\psi_{\mu\nu;m,n}^{L}
+
\chi_{\mu\nu;m,n}
\right)
\\[1mm]
&=
y\,L_{-1}\psi_{\mu\nu;m,n}^{L}
+
(L_{-1}y)\psi_{\mu\nu;m,n}^{L}
+
L_{-1}\chi_{\mu\nu;m,n}.
\end{align}
\end{subequations}
Since
\begin{equation}
L_{-1}\psi_{\mu\nu;m,n}^{L}
=
\psi_{\mu\nu;m+1,n}^{L},
\end{equation}
this becomes
\begin{equation}
\psi_{\mu\nu;m+1,n}^{\log}
=
y\,\psi_{\mu\nu;m+1,n}^{L}
+
\chi_{\mu\nu;m+1,n},
\end{equation}
with $\chi_{\mu\nu;m+1,n}$ given by
Eq.~\eqref{eq:chi_left_recursion}.

The same argument with $\bar L_{-1}$ yields
\begin{equation}
\psi_{\mu\nu;m,n+1}^{\log}
=
y\,\psi_{\mu\nu;m,n+1}^{L}
+
\chi_{\mu\nu;m,n+1},
\end{equation}
with $\chi_{\mu\nu;m,n+1}$ given by
Eq.~\eqref{eq:chi_right_recursion}. Since the two lowering generators
commute,
\begin{equation}
[L_{-1},\bar L_{-1}]=0,
\end{equation}
the decomposition is independent of the order in which the two
descendant directions are generated. The result therefore holds for
all $(m,n)\in\mathbb Z_{\geq0}^2$.
\end{proof}

We now show that the remainder is meromorphic in the complex radial coordinate $r$ at every global descendant level and therefore carries no logarithmic branch monodromy.

\begin{proposition}[Meromorphicity of the descendant remainder]
\label{prop:remainder_meromorphic}
For every
\begin{equation}
(m,n)\in\mathbb Z_{\geq0}^2,
\end{equation}
the remainder
$\chi_{\mu\nu;m,n}$ is meromorphic in
the complex radial coordinate $r$. In particular, it contains no
logarithmic branch and acquires no additive logarithmic monodromy under
analytic continuation along a closed contour that avoids its poles.
\end{proposition}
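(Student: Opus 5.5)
Because $\chi_{\mu\nu;m,n}$ is generated by the recursions of Proposition~\ref{prop:general_descendant_decomposition}, the plan is a double induction on $(m,n)$. The key step is to identify a class of tensor fields closed under the two lowering operators. Let $\mathcal M$ be the set of rank-two covariant tensor fields whose components, in the coordinate basis $(\tau,\phi,\rho)$, are finite sums of terms $e^{-i(ju+kv)}f(r)$ with $j,k\in\mathbb Z$ and $f$ a rational function of $r=e^\rho$. For the radial continuation, $u$ and $v$ are held fixed, so every element of $\mathcal M$ is meromorphic in $r$. By Eq.~\eqref{eq:exact_left_graviton}, Eq.~\eqref{eq:tanh_rational_r} and Eq.~\eqref{eq:a_rational_r}, we have $\psi^L_{\mu\nu}\in\mathcal M$. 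By Eq.~\eqref{eq:Lminus1_y} and Eq.~\eqref{eq:Lbarminus1_y}, the scalars $L_{-1}y$ and $\bar L_{-1}y$ are each an exponential in $u$ or $v$ times a rational function of $r$.

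First, I will show that $\mathcal M$ is stable under $L_{-1}$ and $\bar L_{-1}$ acting as Lie derivatives. This is the main point, and it reduces to checking three things in the explicit formulas Eq.~\eqref{eq:Lminus1_generator} and Eq.~\eqref{eq:Lbarminus1_generator}, together with the tensor formula Eq.~\eqref{eq:tensor_Lie_derivative}.
\begin{enumerate}
\item The Killing vector components $\coth2\rho$, $\csch2\rho$ and the constant $\tfrac i2$ are rational in $r$. For example, $\coth 2\rho=(r^4+1)/(r^4-1)$ and $\csch2\rho=2r^2/(r^4-1)$. The prefactors $e^{-iu}$ and $e^{-iv}$ keep us inside the allowed exponential class.
\item The derivatives $\partial_u$ and $\partial_v$ of $e^{-i(ju+kv)}f(r)$ stay in the class.
\item The radial derivative acts as $\partial_\rho=r\,\partial_r$. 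This maps rational functions to rational functions, and no logarithm is produced because the exponent in $r=e^\rho$ is never touched.
\end{enumerate}
The terms $h_{\alpha\nu}\partial_\mu\xi^\alpha$ in Eq.~\eqref{eq:tensor_Lie_derivative} only involve derivatives of these same component functions, so they are also in $\mathcal M$. Since $\mathcal M$ is clearly a module over scalars of the same type, stability follows.

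Second, I run the induction. Iterating the stability result from $\psi^L\in\mathcal M$ gives $\psi^L_{\mu\nu;m,n}\in\mathcal M$ for all $(m,n)$. The base case is $\chi_{\mu\nu;0,0}=0\in\mathcal M$. If $\chi_{\mu\nu;m,n}\in\mathcal M$, then by Eq.~\eqref{eq:chi_left_recursion} the field $\chi_{\mu\nu;m+1,n}$ is $L_{-1}\chi_{\mu\nu;m,n}$ plus $(L_{-1}y)\psi^L_{\mu\nu;m,n}$. The first term is in $\mathcal M$ by stability, and the second is the product of an admissible scalar with an element of $\mathcal M$. The same argument applies in the other direction using Eq.~\eqref{eq:chi_right_recursion}.

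Finally, an element of $\mathcal M$ is, at fixed $(u,v)$, a finite sum of rational functions of $r$. Such a sum is single-valued, so it acquires no additive monodromy along any closed contour avoiding its poles. These poles lie at $r=0$, at roots of unity of order dividing $4$, and at $r=\pm i$.

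The only delicate point is the first step, specifically the claim that $\partial_\rho$ and the Killing vector coefficients never produce $\log r$ or $\log\cosh\rho$. I would verify this by writing everything as a rational function of $r$ once, rather than working with hyperbolic functions.
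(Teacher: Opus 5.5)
Your proposal is correct and follows the same route as the paper: induction along the recursions \eqref{eq:chi_left_recursion}--\eqref{eq:chi_right_recursion}, starting from $\chi_{\mu\nu;0,0}=0$, using that $L_{-1}y$, $\bar L_{-1}y$ and $\psi^{L}_{\mu\nu;m,n}$ are meromorphic and that the lowering operators preserve meromorphy. Your explicit invariant class $\mathcal M$ (exponentials in $u,v$ times rational functions of $r$, checked against the full tensorial Lie derivative including the $h\,\partial\xi$ terms) makes precise the paper's one-line claim that $L_{-1}$ and $\bar L_{-1}$ are first-order operators with meromorphic coefficients, and in particular handles the $\partial_u,\partial_v$ derivatives, which are not taken at fixed $(u,v)$.
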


\begin{proof}
We proceed recursively from
\begin{equation}
\chi_{\mu\nu;0,0}=0,
\end{equation}
which is trivially meromorphic.

Assume that $\chi_{\mu\nu;m,n}$ and
$\psi_{\mu\nu;m,n}^{L}$ are meromorphic in $r$. The left recursion
\eqref{eq:chi_left_recursion} gives
\begin{equation}
\chi_{\mu\nu;m+1,n}
=
L_{-1}\chi_{\mu\nu;m,n}
+
(L_{-1}y)\psi_{\mu\nu;m,n}^{L}.
\label{eq:meromorphic_induction_left}
\end{equation}
The first term on the right-hand side is meromorphic because
$L_{-1}$ is a first-order differential operator with meromorphic
coefficients. The second term is a product of the meromorphic function
$L_{-1}y$ and the meromorphic tensor
$\psi_{\mu\nu;m,n}^{L}$. Hence
$\chi_{\mu\nu;m+1,n}$ is meromorphic.

Similarly, from
\begin{equation}
\chi_{\mu\nu;m,n+1}
=
\bar L_{-1}\chi_{\mu\nu;m,n}
+
(\bar L_{-1}y)\psi_{\mu\nu;m,n}^{L},
\label{eq:meromorphic_induction_right}
\end{equation}
both terms on the right-hand side are meromorphic, and therefore
$\chi_{\mu\nu;m,n+1}$ is meromorphic.

The ordinary descendants are generated from the meromorphic primary
$\psi_{\mu\nu}^{L}$ by repeated action of the same meromorphic
differential operators,
\begin{equation}
\psi_{\mu\nu;m,n}^{L}
=
L_{-1}^{\,m}\bar L_{-1}^{\,n}
\psi_{\mu\nu}^{L},
\end{equation}
and are consequently meromorphic for all $(m,n)$. The recursion
therefore closes, proving the statement at every descendant level.
\end{proof}

Proposition~\ref{prop:remainder_meromorphic} makes precise the sense in
which the remainder is log-free. This yields a useful structural decomposition of the entire global descendant family
\begin{equation}
\psi_{\mu\nu;m,n}^{\log}
=
\underbrace{
y\,\psi_{\mu\nu;m,n}^{L}
}_{\text{multivalued logarithmic part}}
+
\underbrace{
\chi_{\mu\nu;m,n}
}_{\text{single-valued meromorphic part}},
\label{eq:analytic_descendant_decomposition}
\end{equation}
with all additive logarithmic multivaluedness of the descendant
concentrated in the single factor $y$.

 In the next section we use these results to derive the asymptotic monodromy at
arbitrary $(m,n)$ and identify the corresponding nilpotent operator on
the global descendant space.

\section{Monodromy Across the Global Descendant Space}
\label{sec:descendant_monodromy}

The preceding section established the bulk decomposition
given by Eq. (\ref{eq:general_descendant_decomposition}) in Proposition~\ref{prop:general_descendant_decomposition}. We now combine this bulk decomposition with the asymptotic
continuation derived in Sec.~\ref{sec:radial_monodromy}. The purpose is
to determine the monodromy of the complete global descendant space
directly from the analytic structure of its tensor wavefunctions,
rather than infer it from the Virasoro commutators. We first derive
the universal descendant monodromy, then identify the nilpotent operator
that generates its unipotent action.

\subsection{Universal asymptotic monodromy}
\label{subsec:universal_asymptotic_monodromy}

We are now in a position to determine the asymptotic monodromy of an
arbitrary global logarithmic descendant. 
\begin{proposition}[Universal asymptotic monodromy]
\label{prop:universal_asymptotic_monodromy}
Let
\begin{equation}
\psi_{\mu\nu;m,n}^{L}
=
L_{-1}^{\,m}\bar L_{-1}^{\,n}\psi_{\mu\nu}^{L},
\qquad
\psi_{\mu\nu;m,n}^{\log}
=
L_{-1}^{\,m}\bar L_{-1}^{\,n}\psi_{\mu\nu}^{\log},
\end{equation}
with $m,n\in\mathbb Z_{\geq0}$. Under analytic continuation along a
positively oriented asymptotic radial contour $C_\infty$, the complete
global descendant family transforms according to
\begin{equation}
M_\infty\psi_{\mu\nu;m,n}^{L}
=
\psi_{\mu\nu;m,n}^{L},
\label{eq:universal_left_monodromy}
\end{equation}
and
\begin{equation}
M_\infty\psi_{\mu\nu;m,n}^{\log}
=
\psi_{\mu\nu;m,n}^{\log}
-
2\pi i\,\psi_{\mu\nu;m,n}^{L}.
\label{eq:universal_log_monodromy}
\end{equation}
The mixing coefficient $-2\pi i$ is independent of the descendant
labels $(m,n)$.
\end{proposition}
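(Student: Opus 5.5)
The proof combines three earlier results: Proposition~\ref{prop:general_descendant_decomposition}, Proposition~\ref{prop:remainder_meromorphic}, and the primary continuation of Sec.~\ref{subsec:primary_asymptotic_monodromy}. Throughout, $M_\infty$ means analytic continuation along $C_\infty$ with $\tau$ and $\phi$ (hence $u,v$) held fixed. On products of scalar coefficients and tensor components it therefore acts multiplicatively, and on sums it acts additively.

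\emph{Step 1: the ordinary tower.} First I establish Eq.~\eqref{eq:universal_left_monodromy}. The components of $\psi^{L}_{\mu\nu}$ are $e^{-2iu}$ times rational functions of $r$, by Eqs.~\eqref{eq:tanh_rational_r}--\eqref{eq:a_rational_r}. Each of $L_{-1}$ and $\bar L_{-1}$ acts through the Lie derivative \eqref{eq:tensor_Lie_derivative}. The Killing components contain $e^{\mp iu}$, $e^{\mp iv}$ and $\coth2\rho$, $\csch2\rho$; in terms of $r$ these are $(r^4+1)/(r^4-1)$ and $2r^2/(r^4-1)$, which are rational. Their derivatives are rational as well. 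The only subtle factor is $\partial_\rho=r\,\partial_r$, which preserves rationality. By induction on $(m,n)$, every $\psi^{L}_{\mu\nu;m,n}$ is a finite sum of exponentials in $u,v$ times rational functions of $r$, so it is single-valued on $C_\infty$ provided the contour avoids its poles.

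The pole set is contained in $\{0,\pm1,\pm i\}$, since the denominators are powers of $r$, $r^2\pm1$ and $r^4-1$. Any circle with $R>1$ therefore avoids it. This gives Eq.~\eqref{eq:universal_left_monodromy}, and by the same reasoning $M_\infty\chi_{\mu\nu;m,n}=\chi_{\mu\nu;m,n}$ via Proposition~\ref{prop:remainder_meromorphic}.

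\emph{Step 2: the logarithmic tower.} Next I apply $M_\infty$ to the decomposition \eqref{eq:general_descendant_decomposition}. Using Eq.~\eqref{eq:M_infty_y} for the factor $y$ and Step 1 for the other two pieces,
\begin{equation*}
M_\infty\psi^{\log}_{\mu\nu;m,n}=(y-2\pi i)\psi^{L}_{\mu\nu;m,n}+\chi_{\mu\nu;m,n}=\psi^{\log}_{\mu\nu;m,n}-2\pi i\,\psi^{L}_{\mu\nu;m,n}.
\end{equation*}
The coefficient is the single period $\Delta_\infty y$ from Proposition~\ref{prop:winding_monodromy}. That period depends only on the winding data $(1,1,1)$ and not on $(m,n)$, which gives the claimed independence of the labels.

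\emph{Main obstacle.} The delicate point is justifying that $M_\infty$ can be applied termwise, and that the $\chi$ terms contribute nothing. Meromorphicity alone guarantees single-valuedness only if $C_\infty$ avoids all poles, and only if the continuation is the same on every descendant level. I would handle this with the explicit pole bound of Step 1, which is uniform in $(m,n)$. With that bound in hand, one fixed circle $|r|=R>1$ serves simultaneously for the whole family. If that circle is nevertheless felt too restrictive, a fallback exists: take $C_\infty$ in the homotopy class of $X'=\mathbb C\setminus\{0,\pm1,\pm i\}$. Since $\Delta_C y$ sees only the winding numbers about $0,\pm i$, and $C_\infty$ winds once about each of these, the value $\Delta_\infty y=-2\pi i$ is unchanged.
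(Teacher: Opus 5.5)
Your proposal is correct and follows the same route as the paper. You apply $M_\infty$ termwise to the decomposition $\psi^{\log}_{\mu\nu;m,n}=y\,\psi^{L}_{\mu\nu;m,n}+\chi_{\mu\nu;m,n}$, using that $\psi^{L}_{\mu\nu;m,n}$ and $\chi_{\mu\nu;m,n}$ are meromorphic and that $M_\infty y=y-2\pi i$. Your added check that all poles lie in $\{0,\pm1,\pm i\}$ uniformly in $(m,n)$, so that one circle $|r|=R>1$ serves the whole tower, makes precise what the paper leaves implicit in its phrase ``for an asymptotic contour that avoids their poles.''
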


\begin{proof}
From Proposition~\ref{prop:general_descendant_decomposition}, the
logarithmic descendant has the form
\begin{equation}
\psi_{\mu\nu;m,n}^{\log}
=
y\,\psi_{\mu\nu;m,n}^{L}
+
\chi_{\mu\nu;m,n}.
\end{equation}
Proposition~\ref{prop:remainder_meromorphic} implies that the remainder
is meromorphic in $r$. The ordinary descendant is likewise
meromorphic. Hence, for an asymptotic contour that avoids their poles,
\begin{equation}
M_\infty\psi_{\mu\nu;m,n}^{L}
=
\psi_{\mu\nu;m,n}^{L},
\qquad
M_\infty\chi_{\mu\nu;m,n}
=
\chi_{\mu\nu;m,n}.
\label{eq:universal_proof_singlevalued}
\end{equation}
Using
\begin{equation}
M_\infty y=y-2\pi i,
\end{equation}
we obtain
\begin{subequations}
\begin{align}
M_\infty\psi_{\mu\nu;m,n}^{\log}
&=
(M_\infty y)
(M_\infty\psi_{\mu\nu;m,n}^{L})
+
M_\infty\chi_{\mu\nu;m,n}
\\[1mm]
&=
(y-2\pi i)\psi_{\mu\nu;m,n}^{L}
+
\chi_{\mu\nu;m,n}
\\[1mm]
&=
\psi_{\mu\nu;m,n}^{\log}
-
2\pi i\,\psi_{\mu\nu;m,n}^{L}.
\end{align}
\end{subequations}
This proves Eqs.~\eqref{eq:universal_left_monodromy} and
\eqref{eq:universal_log_monodromy}.
\end{proof}

For each bi-grade $(m,n)$, introduce the ordered pair
\begin{equation}
\Psi_{\mu\nu;m,n}
=
\begin{pmatrix}
\psi_{\mu\nu;m,n}^{L}
\\[1mm]
\psi_{\mu\nu;m,n}^{\log}
\end{pmatrix}.
\label{eq:descendant_pair_vector}
\end{equation}
Proposition~\ref{prop:universal_asymptotic_monodromy} then implies
\begin{equation}
M_\infty
\Psi_{\mu\nu;m,n}
=
\begin{pmatrix}
1&0
\\[1mm]
-2\pi i&1
\end{pmatrix}
\Psi_{\mu\nu;m,n},
\qquad
m,n\geq0.
\label{eq:universal_monodromy_matrix}
\end{equation}
The same unipotent matrix therefore acts on every logarithmic pair in
the global descendant tower.

Equivalently,
\begin{equation}
\left(
M_\infty-\mathbf 1
\right)
\psi_{\mu\nu;m,n}^{L}
=
0,
\label{eq:MminusI_left}
\end{equation}
whereas
\begin{equation}
\left(
M_\infty-\mathbf 1
\right)
\psi_{\mu\nu;m,n}^{\log}
=
-2\pi i\,\psi_{\mu\nu;m,n}^{L}.
\label{eq:MminusI_log}
\end{equation}
Applying the same operator once more gives
\begin{equation}
\left(
M_\infty-\mathbf 1
\right)^2
\psi_{\mu\nu;m,n}^{\log}
=
0,
\label{eq:MminusI_squared}
\end{equation}
and therefore
\begin{equation}
\left(
M_\infty-\mathbf 1
\right)^2
=
0
\label{eq:global_unipotent_rank_two}
\end{equation}
on the global logarithmic descendant space. The monodromy is thus
unipotent of index two throughout the entire bi-graded tower.

The universal monodromy naturally defines a nilpotent map from each
logarithmic descendant to its ordinary partner. In the next subsection
we construct this operator on the complete global descendant space and
express $M_\infty$ as its unipotent exponential.

\subsection{Nilpotent monodromy operator}
\label{subsec:nilpotent_monodromy_operator}

The universal transformation obtained in
Proposition~\ref{prop:universal_asymptotic_monodromy} defines a
nilpotent operator naturally on the global descendant space. Rather
than introducing this operator from the zero-mode Jordan structure, we
define it directly from the asymptotic bulk monodromy. This makes it
possible to compare, rather than identify by assumption, the analytic
and representation-theoretic nilpotent structures.

Let
\begin{equation}
\mathcal V_{\mathrm{glob}}
=
\operatorname{span}
\left\{
\psi_{\mu\nu;m,n}^{L},
\psi_{\mu\nu;m,n}^{\log}
\ \big|\ 
m,n\in\mathbb Z_{\geq0}
\right\}
\label{eq:global_descendant_space}
\end{equation}
denote the global logarithmic descendant space. Since
\begin{equation}
\left(M_\infty-\mathbf 1\right)^2=0
\label{eq:M_unipotent_recalled}
\end{equation}
on $\mathcal V_{\mathrm{glob}}$, define the monodromy nilpotent by
\begin{equation}
\mathcal N_\infty
:=
-\frac{1}{2\pi i}
\left(
M_\infty-\mathbf 1
\right).
\label{eq:Ninfty_definition}
\end{equation}
The universal monodromy relations immediately imply
\begin{equation}
\mathcal N_\infty
\psi_{\mu\nu;m,n}^{L}
=
0,
\qquad
\mathcal N_\infty
\psi_{\mu\nu;m,n}^{\log}
=
\psi_{\mu\nu;m,n}^{L},
\label{eq:Ninfty_action}
\end{equation}
for every $m,n\geq0$. Consequently,
\begin{equation}
\mathcal N_\infty^2=0
\label{eq:Ninfty_nilpotent}
\end{equation}
throughout $\mathcal V_{\mathrm{glob}}$.

The asymptotic monodromy can therefore be reconstructed from
$\mathcal N_\infty$ as
\begin{equation}
M_\infty
=
\mathbf 1-2\pi i\,\mathcal N_\infty.
\label{eq:M_Ninfty_linear}
\end{equation}
Because $\mathcal N_\infty^2=0$, the exponential truncates after the
linear term
\begin{equation}
M_\infty
=
\exp\!\left(
-2\pi i\,\mathcal N_\infty
\right).
\label{eq:M_Ninfty_exponential}
\end{equation}
Thus the asymptotic continuation defines a rank-two unipotent action on
every nontrivial logarithmic pair in the global descendant space.

For each fixed bi-grade $(m,n)$, the ordered basis
\begin{equation}
\Psi_{\mu\nu;m,n}
=
\begin{pmatrix}
\psi_{\mu\nu;m,n}^{L}
\\[1mm]
\psi_{\mu\nu;m,n}^{\log}
\end{pmatrix}
\end{equation}
gives
\begin{equation}
\mathcal N_\infty
\Psi_{\mu\nu;m,n}
=
\begin{pmatrix}
0&0
\\[1mm]
1&0
\end{pmatrix}
\Psi_{\mu\nu;m,n},
\label{eq:Ninfty_matrix}
\end{equation}
and hence
\begin{equation}
M_\infty
\Psi_{\mu\nu;m,n}
=
\exp\!\left[
-2\pi i
\begin{pmatrix}
0&0
\\[1mm]
1&0
\end{pmatrix}
\right]
\Psi_{\mu\nu;m,n}
=
\begin{pmatrix}
1&0
\\[1mm]
-2\pi i&1
\end{pmatrix}
\Psi_{\mu\nu;m,n}.
\label{eq:M_matrix_from_Ninfty}
\end{equation}
The same nilpotent matrix is therefore realized at every bi-grade.

We now compare $\mathcal N_\infty$ with the nilpotent map
associated with the established Jordan action of the zero modes. Let
$N$ denote the representation-theoretic nilpotent defined on the
primary logarithmic pair by
\begin{equation}
N\psi_{\mu\nu}^{L}=0,
\qquad
N\psi_{\mu\nu}^{\log}
=
\psi_{\mu\nu}^{L}.
\label{eq:N_primary_comparison}
\end{equation}
The standard Virasoro commutation relations imply that its extension
through the global descendant tower acts as
\begin{equation}
N\psi_{\mu\nu;m,n}^{L}=0,
\qquad
N\psi_{\mu\nu;m,n}^{\log}
=
\psi_{\mu\nu;m,n}^{L}.
\label{eq:N_descendant_comparison}
\end{equation}
Comparing Eqs.~\eqref{eq:Ninfty_action} and
\eqref{eq:N_descendant_comparison}, we obtain
\begin{equation}
\mathcal N_\infty
=
N
\qquad
\text{on }\mathcal V_{\mathrm{glob}}.
\label{eq:Ninfty_equals_N}
\end{equation}
This completes the monodromy analysis of the global descendant space.
The next step is to determine whether the bulk-derived nilpotent map is
compatible with the complete global conformal action, including the
raising generators $L_1$ and $\bar L_1$. We address this question in
the following section.

\section{Full Global Conformal Action and Monodromy Intertwining}
\label{sec:global_action}

The preceding sections constructed the asymptotic monodromy directly
from the complex radial structure of the bulk logarithmic
graviton and extended its action to the complete global descendant
space. This construction produced the nilpotent operator
$\mathcal N_\infty$ independently of the global conformal
representation and subsequently showed that its action agrees with the
nilpotent map associated with the logarithmic pair. We now ask a
stronger question: whether this bulk-derived monodromy is compatible
with the full
$SL(2,\mathbb R)_L\times SL(2,\mathbb R)_R$ action on the descendant
space. This requires more than the lowering relations used to construct
the descendants. We must examine the action of the zero modes and the
raising generators and determine whether the monodromy nilpotent
intertwines the complete global conformal representation. We begin with
$L_0$ and $\bar L_0$, whose action makes explicit how the logarithmic
mixing is distributed over the bi-graded descendant tower.

\subsection{Action of $L_0$ and $\bar L_0$}
\label{subsec:L0_Lbar0_action}

The ordinary left-moving primary has conformal weights $(h,\bar h)=(2,0)$, 
so that
\begin{equation}
L_0\psi_{\mu\nu}^{L}
=
2\psi_{\mu\nu}^{L},
\qquad
\bar L_0\psi_{\mu\nu}^{L}
=
0.
\label{eq:L0_primary_left_section7}
\end{equation}
For the logarithmic partner, the tensorial Leibniz rule therefore gives
\begin{equation}
L_0\psi_{\mu\nu}^{\log}
=
L_0\left(y\psi_{\mu\nu}^{L}\right)
=
yL_0\psi_{\mu\nu}^{L}
+
(L_0y)\psi_{\mu\nu}^{L}
=
2\psi_{\mu\nu}^{\log}
+
\frac{1}{2}\psi_{\mu\nu}^{L},
\label{eq:L0_log_primary_section7}
\end{equation}
and
\begin{equation}
\bar L_0\psi_{\mu\nu}^{\log}
=
\bar L_0\left(y\psi_{\mu\nu}^{L}\right)
=
y\bar L_0\psi_{\mu\nu}^{L}
+
(\bar L_0y)\psi_{\mu\nu}^{L}
=
\frac{1}{2}\psi_{\mu\nu}^{L}.
\label{eq:Lbar0_log_primary_section7}
\end{equation}
Their sum,
\begin{equation}
D:=L_0+\bar L_0,
\label{eq:D_definition_section7}
\end{equation}
acts as
\begin{equation}
D\psi_{\mu\nu}^{L}
=
2\psi_{\mu\nu}^{L},
\qquad
D\psi_{\mu\nu}^{\log}
=
2\psi_{\mu\nu}^{\log}
+
\psi_{\mu\nu}^{L}.
\label{eq:D_primary_section7}
\end{equation}
Hence the nilpotent part of the total zero-mode action has the
canonical normalization
\begin{equation}
(D-2)\psi_{\mu\nu}^{\log}
=
\psi_{\mu\nu}^{L},
\qquad
(D-2)\psi_{\mu\nu}^{L}
=
0.
\label{eq:D_nilpotent_primary_section7}
\end{equation}
This is precisely the normalization with which the monodromy
nilpotent $\mathcal N_\infty$ was obtained in
Sec.~\ref{subsec:nilpotent_monodromy_operator}.

We now extend the zero-mode action to arbitrary bi-grade $(m,n)$.
Using the global conformal commutators
\begin{equation}
[L_0,L_{-1}]=L_{-1},
\qquad
[\bar L_0,\bar L_{-1}]=\bar L_{-1},
\qquad
[L_0,\bar L_{-1}]
=
[\bar L_0,L_{-1}]
=
0,
\label{eq:zero_lowering_commutators}
\end{equation}
the ordinary descendants satisfy
\begin{equation}
L_0\psi_{\mu\nu;m,n}^{L}
=
(2+m)\psi_{\mu\nu;m,n}^{L},
\label{eq:L0_left_descendant}
\end{equation}
and
\begin{equation}
\bar L_0\psi_{\mu\nu;m,n}^{L}
=
n\,\psi_{\mu\nu;m,n}^{L}.
\label{eq:Lbar0_left_descendant}
\end{equation}
Thus the ordinary descendant at bi-grade $(m,n)$ carries weights
\begin{equation}
(h_{m,n},\bar h_{m,n})
=
(2+m,n).
\label{eq:descendant_weights}
\end{equation}

For the logarithmic descendants, the same commutators may be applied
to their definition
\begin{equation}
\psi_{\mu\nu;m,n}^{\log}
=
L_{-1}^{\,m}\bar L_{-1}^{\,n}
\psi_{\mu\nu}^{\log}.
\label{eq:log_descendant_definition_section7}
\end{equation}
Using the primary relations
\eqref{eq:L0_log_primary_section7} and
\eqref{eq:Lbar0_log_primary_section7}, one obtains
\begin{equation}
L_0\psi_{\mu\nu;m,n}^{\log}
=
(2+m)\psi_{\mu\nu;m,n}^{\log}
+
\frac{1}{2}\psi_{\mu\nu;m,n}^{L},
\label{eq:L0_log_descendant}
\end{equation}
and
\begin{equation}
\bar L_0\psi_{\mu\nu;m,n}^{\log}
=
n\,\psi_{\mu\nu;m,n}^{\log}
+
\frac{1}{2}\psi_{\mu\nu;m,n}^{L}.
\label{eq:Lbar0_log_descendant}
\end{equation}
Adding the two relations gives
\begin{equation}
D\psi_{\mu\nu;m,n}^{\log}
=
(2+m+n)\psi_{\mu\nu;m,n}^{\log}
+
\psi_{\mu\nu;m,n}^{L},
\label{eq:D_log_descendant}
\end{equation}
while
\begin{equation}
D\psi_{\mu\nu;m,n}^{L}
=
(2+m+n)\psi_{\mu\nu;m,n}^{L}.
\label{eq:D_left_descendant}
\end{equation}

Consequently, at every bi-grade the non-semisimple part of the total
zero-mode action is
\begin{equation}
\left[
D-(2+m+n)\mathbf 1
\right]
\psi_{\mu\nu;m,n}^{\log}
=
\psi_{\mu\nu;m,n}^{L},
\label{eq:D_nilpotent_descendant_log}
\end{equation}
and
\begin{equation}
\left[
D-(2+m+n)\mathbf 1
\right]
\psi_{\mu\nu;m,n}^{L}
=
0.
\label{eq:D_nilpotent_descendant_left}
\end{equation}
Comparison with
Eq.~\eqref{eq:Ninfty_action} therefore yields
\begin{equation}
\left.
\left[
D-(2+m+n)\mathbf 1
\right]
\right|_{\mathcal V_{m,n}}
=
\left.
\mathcal N_\infty
\right|_{\mathcal V_{m,n}},
\label{eq:D_nilpotent_equals_Ninfty}
\end{equation}
where
\begin{equation}
\mathcal V_{m,n}
=
\operatorname{span}
\left\{
\psi_{\mu\nu;m,n}^{L},
\psi_{\mu\nu;m,n}^{\log}
\right\}.
\label{eq:Vmn_definition}
\end{equation}
Thus the nilpotent operator extracted independently from the bulk
monodromy agrees, at every descendant level, with the
non-diagonalizable part of the total zero-mode action.

In the ordered basis
\begin{equation}
\Psi_{\mu\nu;m,n}
=
\begin{pmatrix}
\psi_{\mu\nu;m,n}^{L}
\\[1mm]
\psi_{\mu\nu;m,n}^{\log}
\end{pmatrix},
\end{equation}
the zero modes are represented by
\begin{equation}
L_0
=
\begin{pmatrix}
2+m & 0
\\[1mm]
\frac{1}{2} & 2+m
\end{pmatrix},
\qquad
\bar L_0
=
\begin{pmatrix}
n & 0
\\[1mm]
\frac{1}{2} & n
\end{pmatrix},
\label{eq:zero_mode_matrices}
\end{equation}
and therefore
\begin{equation}
D
=
\begin{pmatrix}
2+m+n & 0
\\[1mm]
1 & 2+m+n
\end{pmatrix}.
\label{eq:D_matrix_descendant}
\end{equation}
The monodromy nilpotent is correspondingly
\begin{equation}
\mathcal N_\infty
=
\begin{pmatrix}
0&0
\\[1mm]
1&0
\end{pmatrix},
\label{eq:Ninfty_matrix_section7}
\end{equation}
and one verifies directly that
\begin{equation}
[\mathcal N_\infty,L_0]
=
[\mathcal N_\infty,\bar L_0]
=
0
\label{eq:Ninfty_zero_mode_commutators}
\end{equation}
on every $\mathcal V_{m,n}$. Since
\begin{equation}
M_\infty
=
e^{-2\pi i\mathcal N_\infty},
\end{equation}
it follows that
\begin{equation}
[M_\infty,L_0]
=
[M_\infty,\bar L_0]
=
0.
\label{eq:monodromy_zero_mode_intertwining}
\end{equation}

The asymptotic bulk monodromy therefore intertwines the action of both
global zero modes, providing the first part of the full
intertwining problem. Compatibility with
$SL(2,\mathbb R)_L\times SL(2,\mathbb R)_R$, however, also requires
the bulk-derived nilpotent map to intertwine the raising operators.
We therefore turn next to the action of $L_1$.

\subsection{Action of $L_1$}
\label{subsec:L1_action}

We next examine the action of the left-moving raising generator
$L_1$. Unlike the zero modes, $L_1$ does not preserve the bi-grade
$(m,n)$, but maps a descendant at level $(m,n)$ to level
$(m-1,n)$. The logarithmic sector requires particular care because the
non-diagonalizable action of $L_0$ generates an additional ordinary
descendant contribution. This provides a nontrivial test of whether the
bulk-derived monodromy nilpotent $\mathcal N_\infty$ intertwines the
global conformal action.

We begin at the primary level. The ordinary left-moving mode is a
highest-weight state and therefore satisfies
\begin{equation}
L_1\psi_{\mu\nu}^{L}=0.
\label{eq:L1_left_primary}
\end{equation}
Under $L_1$, the logarithmic partner transforms as
\begin{equation}
L_1\psi_{\mu\nu}^{\log}
=
L_1
\left(
y\,\psi_{\mu\nu}^{L}
\right)
=
(L_1y)\psi_{\mu\nu}^{L}
+
yL_1\psi_{\mu\nu}^{L}
=
0,
\end{equation}
and therefore
\begin{equation}
L_1\psi_{\mu\nu}^{L}
=
L_1\psi_{\mu\nu}^{\log}
=
0,
\label{eq:L1_primary_pair}
\end{equation}
with both members of the logarithmic primary pair annihilated by
the left-moving global raising generator.

We now consider an arbitrary descendant. The global conformal algebra
contains
\begin{equation}
[L_1,L_{-1}]
=
2L_0,
\qquad
[L_0,L_{-1}]
=
L_{-1}.
\label{eq:sl2_commutators_L1}
\end{equation}
Repeated use of these relations gives
\begin{equation}
[L_1,L_{-1}^{\,m}]
=
mL_{-1}^{\,m-1}
\left(
2L_0+m-1
\right).
\label{eq:L1_Lminusm_commutator}
\end{equation}
Since
\begin{equation}
[L_1,\bar L_{-1}]=0,
\label{eq:L1_barLminus_commutator}
\end{equation}
the right-moving descendant label $n$ is unaffected by the action of
$L_1$.

For the ordinary descendant,
\begin{equation}
L_1\psi_{\mu\nu;m,n}^{L}
=
L_1
L_{-1}^{\,m}\bar L_{-1}^{\,n}
\psi_{\mu\nu}^{L}
=
[L_1,L_{-1}^{\,m}]
\bar L_{-1}^{\,n}
\psi_{\mu\nu}^{L},
\end{equation}
where Eq.~\eqref{eq:L1_left_primary} has been used. Since
\begin{equation}
L_0\psi_{\mu\nu}^{L}
=
2\psi_{\mu\nu}^{L},
\end{equation}
Eq.~\eqref{eq:L1_Lminusm_commutator} yields
\begin{equation}
L_1\psi_{\mu\nu;m,n}^{L}
=
m(m+3)\,
\psi_{\mu\nu;m-1,n}^{L},
\qquad
m\geq1.
\label{eq:L1_left_descendant_action}
\end{equation}
For $m=0$, the right-hand side is understood to vanish, consistently
with the highest-weight condition.

The logarithmic descendant contains an additional contribution because
$L_0$ acts non-diagonalizably on the logarithmic primary. Using Eq. (\ref{eq:L0_log_primary_section7}), we find
\begin{subequations}
\begin{align}
L_1\psi_{\mu\nu;m,n}^{\log}
&=
[L_1,L_{-1}^{\,m}]
\bar L_{-1}^{\,n}
\psi_{\mu\nu}^{\log}
\\[1mm]
&=
mL_{-1}^{\,m-1}\bar L_{-1}^{\,n}
\left(
2L_0+m-1
\right)
\psi_{\mu\nu}^{\log}
\\[1mm]
&=
mL_{-1}^{\,m-1}\bar L_{-1}^{\,n}
\left[
(m+3)\psi_{\mu\nu}^{\log}
+
\psi_{\mu\nu}^{L}
\right].
\end{align}
\end{subequations}
Therefore
\begin{equation}
L_1\psi_{\mu\nu;m,n}^{\log}
=
m(m+3)\,
\psi_{\mu\nu;m-1,n}^{\log}
+
m\,\psi_{\mu\nu;m-1,n}^{L},
\qquad
m\geq1.
\label{eq:L1_log_descendant_action}
\end{equation}

The second term in
Eq.~\eqref{eq:L1_log_descendant_action} is a direct consequence of the
logarithmic mixing in the $L_0$ action. It has no analogue in the
ordinary descendant tower. Thus, although $L_1$ preserves the
two-component logarithmic descendant space globally, its action is not
simply two identical copies of the ordinary highest-weight
representation.

For fixed $(m,n)$ with $m\geq1$, the action may be displayed as
\begin{equation}
L_1
\begin{pmatrix}
\psi_{\mu\nu;m,n}^{L}
\\[1mm]
\psi_{\mu\nu;m,n}^{\log}
\end{pmatrix}
=
\begin{pmatrix}
m(m+3) & 0
\\[1mm]
m & m(m+3)
\end{pmatrix}
\begin{pmatrix}
\psi_{\mu\nu;m-1,n}^{L}
\\[1mm]
\psi_{\mu\nu;m-1,n}^{\log}
\end{pmatrix}.
\label{eq:L1_pair_matrix}
\end{equation}
Equivalently, introducing the nilpotent matrix
\begin{equation}
\mathcal N_\infty
=
\begin{pmatrix}
0&0
\\[1mm]
1&0
\end{pmatrix},
\end{equation}
the level-changing action takes the form
\begin{equation}
L_1
=
m(m+3)\,\mathbf 1
+
m\,\mathcal N_\infty
\label{eq:L1_nilpotent_form}
\end{equation}
as a map from $\mathcal V_{m,n}$ to
$\mathcal V_{m-1,n}$.

We can now test the intertwining property directly. Acting first with
$\mathcal N_\infty$ and then with $L_1$ on a logarithmic descendant
gives
\begin{equation}
L_1\mathcal N_\infty
\psi_{\mu\nu;m,n}^{\log}
=
L_1\psi_{\mu\nu;m,n}^{L}
=
m(m+3)
\psi_{\mu\nu;m-1,n}^{L}.
\label{eq:L1_N_log}
\end{equation}
Reversing the order gives
\begin{equation}
\mathcal N_\infty
L_1\psi_{\mu\nu;m,n}^{\log}
=
\mathcal N_\infty
\left[
m(m+3)\psi_{\mu\nu;m-1,n}^{\log}
+
m\psi_{\mu\nu;m-1,n}^{L}
\right]
=
m(m+3)
\psi_{\mu\nu;m-1,n}^{L},
\label{eq:N_L1_log}
\end{equation}
because
\begin{equation}
\mathcal N_\infty
\psi_{\mu\nu;m-1,n}^{L}
=
0.
\end{equation}
The same relation is immediate on an ordinary descendant. Hence
\begin{equation}
[\mathcal N_\infty,L_1]=0
\label{eq:Ninfty_L1_commutator}
\end{equation}
on the complete global descendant space.

Since
\begin{equation}
M_\infty
=
\exp(-2\pi i\mathcal N_\infty),
\end{equation}
the same result implies
\begin{equation}
[M_\infty,L_1]=0.
\label{eq:M_L1_commutator}
\end{equation}
Thus analytic continuation along the asymptotic radial contour
intertwines the action of the left-moving raising generator.

This result is nontrivial in the logarithmic sector because
$L_1$ itself contains the additional mixing term displayed in
Eq.~\eqref{eq:L1_log_descendant_action}. The intertwining relation does
not follow from treating the logarithmic and ordinary descendant towers
as identical representations. Rather, the additional term lies in the
image of the nilpotent map and is annihilated by a subsequent action of
$\mathcal N_\infty$, precisely as required for the two operations to
commute.

Together with
Eq.~\eqref{eq:monodromy_zero_mode_intertwining}, we have therefore
established
\begin{equation}
[\mathcal N_\infty,L_0]
=
[\mathcal N_\infty,L_1]
=
0,
\label{eq:left_partial_intertwining}
\end{equation}
and equivalently
\begin{equation}
[M_\infty,L_0]
=
[M_\infty,L_1]
=
0.
\label{eq:left_partial_M_intertwining}
\end{equation}
The corresponding test for the right-moving raising generator is
carried out in the next subsection.

\subsection{Action of $\bar L_1$}
\label{subsec:Lbar1_action}

We now turn to the right-moving raising generator
$\bar L_1$. The derivation parallels the left-moving analysis of
Sec.~\ref{subsec:L1_action}, but the resulting coefficients are
different because the ordinary left-moving primary has right-moving
weight $\bar h=0$. This distinction makes the action of $\bar L_1$ particularly instructive in the logarithmic sector.

At the primary level, the ordinary left-moving mode and the logarithmic partner both obey
\begin{equation}
\bar L_1\psi_{\mu\nu}^{L}
=
\bar L_1\psi_{\mu\nu}^{\log}
=
0.
\label{eq:Lbar1_left_primary}
\end{equation}

For arbitrary right-moving descendant level, the global conformal
algebra gives
\begin{equation}
[\bar L_1,\bar L_{-1}]
=
2\bar L_0,
\qquad
[\bar L_0,\bar L_{-1}]
=
\bar L_{-1},
\label{eq:bar_sl2_commutators}
\end{equation}
and hence
\begin{equation}
[\bar L_1,\bar L_{-1}^{\,n}]
=
n\bar L_{-1}^{\,n-1}
\left(
2\bar L_0+n-1
\right).
\label{eq:Lbar1_Lbarminusn_commutator}
\end{equation}
Since
\begin{equation}
[\bar L_1,L_{-1}]=0,
\label{eq:Lbar1_Lminus_commutator}
\end{equation}
the left-moving descendant index $m$ is unaffected by the action of
$\bar L_1$.

For the ordinary descendant,
\begin{equation}
\bar L_1\psi_{\mu\nu;m,n}^{L}
=
\bar L_1
L_{-1}^{\,m}\bar L_{-1}^{\,n}
\psi_{\mu\nu}^{L}
=
L_{-1}^{\,m}
[\bar L_1,\bar L_{-1}^{\,n}]
\psi_{\mu\nu}^{L},
\end{equation}
where the primary condition
\eqref{eq:Lbar1_left_primary} has been used. Making use of Eq. (\ref{eq:L0_primary_left_section7}), Eq.~\eqref{eq:Lbar1_Lbarminusn_commutator} yields
\begin{equation}
\bar L_1\psi_{\mu\nu;m,n}^{L}
=
n(n-1)\,
\psi_{\mu\nu;m,n-1}^{L},
\qquad
n\geq1.
\label{eq:Lbar1_left_descendant_action}
\end{equation}
For $n=0$, the right-hand side vanishes by the highest-weight
condition.

The logarithmic descendant again acquires an additional ordinary
contribution. From Eq. (\ref{eq:Lbar0_log_primary_section7}), one finds
\begin{subequations}
\begin{align}
\bar L_1\psi_{\mu\nu;m,n}^{\log}
&=
L_{-1}^{\,m}
[\bar L_1,\bar L_{-1}^{\,n}]
\psi_{\mu\nu}^{\log}
\\[1mm]
&=
nL_{-1}^{\,m}\bar L_{-1}^{\,n-1}
\left(
2\bar L_0+n-1
\right)
\psi_{\mu\nu}^{\log}
\\[1mm]
&=
nL_{-1}^{\,m}\bar L_{-1}^{\,n-1}
\left[
(n-1)\psi_{\mu\nu}^{\log}
+
\psi_{\mu\nu}^{L}
\right].
\end{align}
\end{subequations}
Therefore
\begin{equation}
\bar L_1\psi_{\mu\nu;m,n}^{\log}
=
n(n-1)\,
\psi_{\mu\nu;m,n-1}^{\log}
+
n\,\psi_{\mu\nu;m,n-1}^{L},
\qquad
n\geq1.
\label{eq:Lbar1_log_descendant_action}
\end{equation}

The right-moving action differs qualitatively from its left-moving
counterpart because the semisimple coefficient is controlled by
$\bar h=0$. In particular, at the first right-moving level,
$n=1$, one obtains
\begin{equation}
\bar L_1\psi_{\mu\nu;m,1}^{L}
=
0,
\label{eq:Lbar1_n1_left}
\end{equation}
whereas
\begin{equation}
\bar L_1\psi_{\mu\nu;m,1}^{\log}
=
\psi_{\mu\nu;m,0}^{L}.
\label{eq:Lbar1_n1_log}
\end{equation}
Thus the first logarithmic right-moving descendant is mapped directly
onto the corresponding ordinary state. This provides a particularly
transparent manifestation of the indecomposable structure.

For fixed $(m,n)$ with $n\geq1$, the action may be written as
\begin{equation}
\bar L_1
\begin{pmatrix}
\psi_{\mu\nu;m,n}^{L}
\\[1mm]
\psi_{\mu\nu;m,n}^{\log}
\end{pmatrix}
=
\begin{pmatrix}
n(n-1) & 0
\\[1mm]
n & n(n-1)
\end{pmatrix}
\begin{pmatrix}
\psi_{\mu\nu;m,n-1}^{L}
\\[1mm]
\psi_{\mu\nu;m,n-1}^{\log}
\end{pmatrix}.
\label{eq:Lbar1_pair_matrix}
\end{equation}
Equivalently,
\begin{equation}
\bar L_1
=
n(n-1)\,\mathbf 1
+
n\,\mathcal N_\infty
\label{eq:Lbar1_nilpotent_form}
\end{equation}
as a map from $\mathcal V_{m,n}$ to
$\mathcal V_{m,n-1}$.

We may now test the intertwining relation. Acting first with
$\mathcal N_\infty$ gives
\begin{equation}
\bar L_1
\mathcal N_\infty
\psi_{\mu\nu;m,n}^{\log}
=
\bar L_1
\psi_{\mu\nu;m,n}^{L}
=
n(n-1)
\psi_{\mu\nu;m,n-1}^{L}.
\label{eq:Lbar1_N_log}
\end{equation}
Reversing the order yields
\begin{equation}
\mathcal N_\infty
\bar L_1
\psi_{\mu\nu;m,n}^{\log}
=
\mathcal N_\infty
\left[
n(n-1)\psi_{\mu\nu;m,n-1}^{\log}
+
n\psi_{\mu\nu;m,n-1}^{L}
\right]
=
n(n-1)
\psi_{\mu\nu;m,n-1}^{L},
\label{eq:N_Lbar1_log}
\end{equation}
since
\begin{equation}
\mathcal N_\infty
\psi_{\mu\nu;m,n-1}^{L}
=
0.
\end{equation}
The same identity is immediate on an ordinary descendant. Hence
\begin{equation}
[\mathcal N_\infty,\bar L_1]=0
\label{eq:Ninfty_Lbar1_commutator}
\end{equation}
throughout the global descendant space.

Since
\begin{equation}
M_\infty
=
\exp(-2\pi i\mathcal N_\infty),
\end{equation}
we also obtain
\begin{equation}
[M_\infty,\bar L_1]=0.
\label{eq:M_Lbar1_commutator}
\end{equation}
Thus the asymptotic bulk monodromy intertwines the right-moving raising
action as well.

Combining the results of this subsection with
Secs.~\ref{subsec:L0_Lbar0_action} and \ref{subsec:L1_action}, we have
now established
\begin{equation}
[\mathcal N_\infty,L_0]
=
[\mathcal N_\infty,\bar L_0]
=
[\mathcal N_\infty,L_1]
=
[\mathcal N_\infty,\bar L_1]
=
0.
\label{eq:nilpotent_intertwining_so_far}
\end{equation}
Together with the defining descendant relations generated by
$L_{-1}$ and $\bar L_{-1}$, this provides the ingredients required to
show that $\mathcal N_\infty$ intertwines the complete global
$SL(2,\mathbb R)_L\times SL(2,\mathbb R)_R$ representation. We make
this statement precise in the following subsections.

\subsection{The nilpotent intertwiner and compatibility with $SL(2,\mathbb R)_L\times SL(2,\mathbb R)_R$}
\label{subsec:nilpotent_intertwiner}

The calculations of the preceding subsections show that the
bulk-derived nilpotent operator $\mathcal N_\infty$ is compatible with
the zero-mode and raising-generator actions. The lowering generators
were already built into the descendant construction itself. We now
collect these facts into a single representation-theoretic statement
to show that $\mathcal N_\infty$ intertwines the complete global
conformal algebra.

\begin{proposition}[Nilpotent intertwiner of the global conformal module]
\label{prop:nilpotent_intertwiner}
On the global logarithmic descendant space
\begin{equation}
\mathcal V_{\mathrm{glob}}
=
\operatorname{span}
\left\{
\psi_{\mu\nu;m,n}^{L},
\psi_{\mu\nu;m,n}^{\log}
\ \big|\ 
m,n\in\mathbb Z_{\geq0}
\right\},
\end{equation}
the bulk monodromy nilpotent $\mathcal N_\infty$ satisfies
\begin{equation}
[\mathcal N_\infty,L_a]
=
[\mathcal N_\infty,\bar L_a]
=
0,
\qquad
a=-1,0,1.
\label{eq:full_N_intertwining}
\end{equation}
Hence $\mathcal N_\infty$ is an endomorphism of the global
$SL(2,\mathbb R)_L\times SL(2,\mathbb R)_R$ representation.
\end{proposition}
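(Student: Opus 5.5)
The plan is to verify the six commutators one generator at a time on the spanning set $\{\psi^{L}_{\mu\nu;m,n},\psi^{\log}_{\mu\nu;m,n}\}$ of $\mathcal V_{\mathrm{glob}}$. Since $\mathcal N_\infty$ is defined on this spanning set by Eq.~\eqref{eq:Ninfty_action}, it suffices to check each commutator on the basis elements.

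Four of the six commutators are already in hand. These are $[\mathcal N_\infty,L_0]=[\mathcal N_\infty,\bar L_0]=0$ from Eq.~\eqref{eq:Ninfty_zero_mode_commutators}, together with $[\mathcal N_\infty,L_1]=0$ and $[\mathcal N_\infty,\bar L_1]=0$ from Eqs.~\eqref{eq:Ninfty_L1_commutator} and \eqref{eq:Ninfty_Lbar1_commutator}. I will simply collect them, and add the remark that the edge cases $m=0$ for $L_1$ and $n=0$ for $\bar L_1$ are covered by the highest-weight conditions \eqref{eq:L1_primary_pair} and \eqref{eq:Lbar1_left_primary}. In those cases both sides vanish: $L_1$ kills both members of the pair, and $\mathcal N_\infty$ sends the pair into a pair that $L_1$ also kills.

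The new content is the lowering generators. By the definition \eqref{eq:general_descendants_recalled} and the commutativity $[L_{-1},\bar L_{-1}]=0$, one has
\begin{equation*}
L_{-1}\psi^{L}_{\mu\nu;m,n}=\psi^{L}_{\mu\nu;m+1,n},
\qquad
L_{-1}\psi^{\log}_{\mu\nu;m,n}=\psi^{\log}_{\mu\nu;m+1,n}.
\end{equation*}
From these I will compute directly
\begin{equation*}
\mathcal N_\infty L_{-1}\psi^{\log}_{\mu\nu;m,n}=\psi^{L}_{\mu\nu;m+1,n}=L_{-1}\mathcal N_\infty\psi^{\log}_{\mu\nu;m,n}.
\end{equation*}
On ordinary states both orders give zero. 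The $\bar L_{-1}$ case is identical with $(m,n+1)$ in place of $(m+1,n)$.

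I will then note that the same conclusion can be cross-checked analytically. By Proposition~\ref{prop:universal_asymptotic_monodromy}, $M_\infty$ acts as $y\mapsto y-2\pi i$ while fixing meromorphic tensors. The generators $L_{-1}$ and $\bar L_{-1}$ have coefficients independent of the continued variable's branch, and they kill the constant $2\pi i$. Hence continuation commutes with their differential action, which yields $[M_\infty,L_{-1}]=0$ and thus $[\mathcal N_\infty,L_{-1}]=0$ through Eq.~\eqref{eq:Ninfty_definition}.

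The main point needing care, rather than a genuine obstacle, is \emph{well-definedness}: $\mathcal N_\infty$ must be a well-defined linear map on $\mathcal V_{\mathrm{glob}}$ even if the spanning vectors are linearly dependent. For example, $\bar L_{-1}$-descendants of a state with $\bar h=0$ may satisfy relations. I would handle this by observing that $\mathcal N_\infty$ is defined intrinsically as $-(2\pi i)^{-1}(M_\infty-\mathbf 1)$, with $M_\infty$ an analytic-continuation operator on functions. It is therefore automatically linear and consistent, and the basis formulas are merely its evaluation. Combining the six relations then gives \eqref{eq:full_N_intertwining}. Consequently $\mathcal N_\infty$ commutes with the whole algebra generated by the $L_a$ and $\bar L_a$, i.e.\ it is an endomorphism of the $SL(2,\mathbb R)_L\times SL(2,\mathbb R)_R$ module.
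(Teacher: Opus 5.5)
Your proposal is correct and follows the paper's proof: the $a=0$ and $a=1$ cases are read off from the zero-mode and raising-generator relations already derived, and $a=-1$ follows directly from the descendant definitions $L_{-1}\psi^{\log}_{\mu\nu;m,n}=\psi^{\log}_{\mu\nu;m+1,n}$ and $L_{-1}\psi^{L}_{\mu\nu;m,n}=\psi^{L}_{\mu\nu;m+1,n}$ (with the barred analogues using $[L_{-1},\bar L_{-1}]=0$). Your additional remarks are sound: well-definedness is secured by the intrinsic definition $\mathcal N_\infty=-(2\pi i)^{-1}(M_\infty-\mathbf 1)$ that the paper also uses, and your analytic cross-check in fact works verbatim for all six generators, since every Killing vector (including its Lie-derivative terms) has coefficients single-valued in $r$ and every $L_a y$, $\bar L_a y$ is meromorphic.
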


\begin{proof}
For $a=-1$, the result follows directly from the descendant definitions. For $a=0$, it follows
from the zero-mode actions
\eqref{eq:L0_log_descendant} and
\eqref{eq:Lbar0_log_descendant}. For $a=1$, it follows from the
raising-generator relations
\eqref{eq:L1_log_descendant_action} and
\eqref{eq:Lbar1_log_descendant_action}. The commutators vanish on both
the ordinary and logarithmic descendants, and therefore on all of
$\mathcal V_{\mathrm{glob}}$.
\end{proof}

Having established that $\mathcal N_\infty$ intertwines all six global
generators, the compatibility between asymptotic bulk monodromy and the complete $SL(2,\mathbb R)_L\times SL(2,\mathbb R)_R$ action follows immediately. Indeed, since it is established that $M_\infty=\mathbf 1-2\pi i\mathcal N_\infty$, Proposition~\ref{prop:nilpotent_intertwiner} immediately implies
\begin{equation}
[M_\infty,L_a]
=
[M_\infty,\bar L_a]
=
0,
\qquad
a=-1,0,1.
\label{eq:M_full_global_commutant}
\end{equation}
Hence $M_\infty$ belongs to the commutant of the
$SL(2,\mathbb R)_L\times SL(2,\mathbb R)_R$ representation carried by
the global logarithmic descendant space.

\section{Summary and Discussion}
\label{sec:conclusions}

In this work, we have investigated the complex radial analytic
structure of the logarithmic graviton in critical topologically massive
gravity and its realization across the global conformal descendant
space. The starting point was the Grumiller--Johansson
logarithmic mode rather than an abstract Jordan representation. This
allowed the monodromy to be derived directly from the
coordinate-dependent bulk solution and subsequently compared with the
known indecomposable structure of the logarithmic sector.

The principal results of the paper can therefore be summarized as
follows:
\begin{enumerate}[(i)]

\item
The complex radial continuation of the logarithmic graviton is
classified by the winding numbers about $r=0,\pm i$, leading to the
asymptotic shift $\Delta_\infty y=-2\pi i$.

\item
The global logarithmic descendants satisfy
\begin{equation}
\psi_{\mu\nu;m,n}^{\log}
=
y\,\psi_{\mu\nu;m,n}^{L}
+
\chi_{\mu\nu;m,n},
\end{equation}
with a log-free meromorphic remainder
$\chi_{\mu\nu;m,n}$.

\item
The complete global descendant tower consequently carries the
universal bulk monodromy
\begin{equation}
M_\infty\psi_{\mu\nu;m,n}^{\log}
=
\psi_{\mu\nu;m,n}^{\log}
-
2\pi i\,\psi_{\mu\nu;m,n}^{L}.
\end{equation}

\item
The nilpotent operator extracted from this monodromy intertwines the
complete global conformal action,
\begin{equation}
[\mathcal N_\infty,L_a]
=
[\mathcal N_\infty,\bar L_a]
=
0,
\qquad
a=-1,0,1.
\end{equation}
Equivalently,
\begin{equation}
[M_\infty,L_a]
=
[M_\infty,\bar L_a]
=
0,
\qquad
a=-1,0,1.
\end{equation}
and the asymptotic bulk monodromy therefore defines an intertwining
endomorphism of the global
$SL(2,\mathbb R)_L\times SL(2,\mathbb R)_R$ representation.
\end{enumerate}
These results shift the role of monodromy from an algebraic
reinterpretation of an already known Jordan cell to an independently
derived property of the bulk fields. The known representation-theoretic structure then emerges as the structure with which this bulk monodromy is compatible.

The method developed here is not tied conceptually to the rank-two degeneration of critical TMG. Massive gravity theories with higher-rank critical points provide natural settings in which the linearized solution space already contains rank-three or higher
logarithmic structures at the single-particle level. Applying the same
sequence---bulk solution, complex radial continuation, descendant
decomposition, and global intertwining---could clarify how the order of
the bulk logarithmic degeneration is reflected in the associated
monodromy representation. This extension lies beyond the scope of the present paper and is left for future work.

\paragraph{Acknowledgements}
The author would like to thank Daniel Grumiller for his response to the author's query concerning the linearly growing contribution indexed by $\rho$ in the Fefferman--Graham expansion of the metric \cite{Grumiller:2008qz,Grumiller:2013at}. The author also thanks anonymous referees for helping improve the clarity, scope, and presentation of this work. The author acknowledges financial support from the Department of Physics at the University of Pretoria.

\clearpage

\bibliographystyle{utphys}
\bibliography{sample}

\end{document}